\newcommand\independent{\protect\mathpalette{\protect\independenT}{\perp}}
\def\independenT#1#2{\mathrel{\rlap{$#1#2$}\mkern2mu{#1#2}}}
\newtheorem{theorem}{Theorem}[section]
\newtheorem{lemma}[theorem]{Lemma}
\newtheorem{corollary}[theorem]{Corollary}
\newtheorem{example}[theorem]{Example}
\newtheorem{assumption}[theorem]{Assumption}
\newtheorem{remark}[theorem]{Remark}
\newcommand{\cov}{\mathrm{cov}}
\newcommand{\st}{~\mathrm{s.t.}~}
\newcommand{\diag}{\mathrm{diag}}
\newcommand{\G}{\mathcal{G}}
\newcommand{\N}{\mathcal{N}}
\newcommand{\R}{\mathbb{R}}
\newcommand{\PP}{\mathbb{P}}
\newcommand{\union}{\cup}
\newcommand{\B}[1]{B^{(#1)}}
\newcommand{\Bk}{B^{(k)}}
\newcommand{\Xk}{X^{(k)}}
\DeclareMathOperator{\pa}{Pa}
\theoremstyle{definition}
\newtheorem{definition}[theorem]{Definition}
\title{Direct Estimation of Differences in Causal Graphs}
\author{
  Yuhao Wang \\
  Lab for Information \& Decision Systems\\
 and Institute for Data, Systems and Society\\
  Massachusetts Institute of Technology\\
  Cambridge, MA 02139 \\
  \texttt{yuhaow@mit.edu} \\
  \And
  Chandler Squires \\
  Lab for Information \& Decision Systems\\
 and Institute for Data, Systems and Society\\
  Massachusetts Institute of Technology\\
  Cambridge, MA 02139 \\
  \texttt{csquires@mit.edu} \\
    \And
     Anastasiya Belyaeva \\
  Lab for Information \& Decision Systems\\
 and Institute for Data, Systems and Society\\
  Massachusetts Institute of Technology\\
  Cambridge, MA 02139 \\
  \texttt{belyaeva@mit.edu} \\
  \And
  Caroline Uhler \\
  Lab for Information \& Decision Systems\\
 and Institute for Data, Systems and Society\\
  Massachusetts Institute of Technology\\
  Cambridge, MA 02139 \\
  \texttt{cuhler@mit.edu }
}
\begin{document}

\maketitle

\begin{abstract}
We consider the problem of estimating the differences between two causal directed acyclic graph (DAG) models with a shared topological order given i.i.d.~samples from each model. This is of interest for example in genomics, where changes in the structure or edge weights of the underlying causal graphs reflect alterations in the gene regulatory networks. We here provide the first provably consistent method for directly estimating the differences in a pair of causal DAGs without separately learning two possibly large and dense DAG models and computing their difference. Our two-step algorithm first uses invariance tests between regression coefficients of the two data sets to estimate the skeleton of the difference graph and then orients some of the edges using invariance tests between regression residual variances. We demonstrate the properties of our method through a simulation study and apply it to the analysis of gene expression data from ovarian cancer and during T-cell activation. 
\end{abstract}

\section{Introduction} \label{sec:intro}

Directed acyclic graph (DAG) models, also known as Bayesian networks, are widely used to model causal relationships in complex systems. Learning the \emph{causal} \emph{DAG} from observations on the nodes is an important problem across disciplines~\cite{FLN00,Pearl:00,RHB00,SGS00}. 
A variety of causal inference algorithms based on observational data have been developed, including the prominent PC~\cite{SGS00} and GES~\cite{Meek1997} algorithms, among others~\cite{SWM17,TBA06}. However, 
these methods require strong assumptions~\cite{URB13}; 
in particular, theoretical analysis of the PC~\cite{KB07} and GES~\cite{NHM15,GB13} algorithms have shown that these methods are usually not consistent in the high-dimensional setting, i.e.~when the number of nodes is of the same order or exceeds the number of samples, unless highly restrictive assumptions on the sparsity and/or the maximum degree of the underlying DAG are made.


The presence of high degree hub nodes is a well-known feature of many networks~\cite{BGL11,BZ04}, thereby limiting the direct applicability of causal inference algorithms. 
However, in many applications, the end goal is not to recover the full causal DAG but 
to detect changes in the causal relations between two related networks. For example, in the analysis of EEG signals it is of interest to detect neurons or different brain regions that interact differently when the subject is performing different activities~\cite{SC13}; in biological pathways genes may control different sets of target genes under different cellular contexts or disease states~\cite{HRD09,POK07}.
Due to recent technological advances that allow the collection of large-scale EEG or single-cell gene expression data sets in different contexts there is a growing need for methods that can accurately identify differences in the underlying regulatory networks and thereby provide key insights into the underlying system~\cite{HRD09,POK07}. The limitations of causal inference algorithms for accurately learning large causal networks with hub nodes and the fact that the difference of two related networks is often sparse call for methods that directly learn the difference of two causal networks without having to estimate each network separately.


The complimentary problem to learning the difference of two DAG models is the problem of inferring the causal structure that is \emph{invariant} across different environments. Algorithms for this problem have been developed in recent literature~\cite{GSKZ17,PBM16,ZHZ17}. However, note that the difference DAG can only be inferred from the invariant causal structure if the two DAGs are known. The problem of learning the difference between two networks has been considered previously in the \emph{undirected} setting~\cite{LFS17,LQG14,ZCL14}. 
However, the undirected setting is often insufficient: 
only a causal (i.e., directed) network provides insights into the effects of interventions such as knocking out a particular gene. 
In this paper we provide to our knowledge the first provably consistent method for directly inferring the differences between pairs of causal DAG models 
that does not require learning each model separately. 


The remainder of this paper is structured as follows: In Section~\ref{sec:related}, we set up the notation and review related work.  
In Section~\ref{sec:method}, we present our algorithm for directly estimating the difference of causal relationships and in Section~\ref{sec:theory}, we provide consistency guarantees for our algorithm. In Section~\ref{sec:eval}, we evaluate the performance of our algorithm on both simulated and real datasets including gene expression data from ovarian cancer and T cell activation. 

\section{Preliminaries and Related Work}\label{sec:related}


Let $\G = ([p], A)$ be a DAG with node set $[p] := \{1, \cdots, p\}$ and arrow set $A$. 
Without loss of generality we label the nodes such that 
if $i\to j$ in $\G$ then $i<j$ (also known as \emph{topological ordering}). To each node $i$ we associate a random variable $X_i$ and let $\PP$ be a joint distribution over the random vector $X = (X_1, \cdots, X_p)^T$. In this paper, we consider the setting where the causal DAG model is given by a \emph{linear structural equation model (SEM) with Gaussian noise}, i.e., 
\begin{align*}
X = B^T X + \epsilon
\end{align*}
where $B$ (the {\it autoregressive matrix}) is strictly upper triangular consisting of the edge weights of $\G$, i.e., $B_{ij}\neq 0$ if and only if $i\to j$ in $\G$, and the noise $\epsilon\sim\mathcal{N}(0,\Omega)$ 
with $\Omega := \diag(\sigma_1^2, \cdots, \sigma_p^2)$, i.e., there are no latent confounders. 
Denoting by $\Sigma$ the covariance matrix of $X$ and by $\Theta$ its inverse (i.e., the \emph{precision matrix}), a short computation yields
$\Theta = (I - B) \Omega^{-1} (I - B)^T$,
and hence 
\begin{equation}\label{eq:theta}
\Theta_{ij} = -\sigma_j^{-2} B_{ij} + \sum\limits_{k > j} \sigma_k^{-2} B_{ik} B_{jk}, \;\, \forall i \neq j \quad\; \textrm{and} \quad\;
\Theta_{ii} = \sigma_i^{-2} + \sum\limits_{j > i} \sigma_j^{-2} B_{ij}^2, \;\, \forall i \in [p].
\end{equation}
This shows that the support of $\Theta$ is given by the {\it moral graph} of $\G$, obtained by adding an edge between pairs of nodes that have a common child and removing all edge orientations. By the \emph{causal Markov assumption}, which we assume throughout, the missing edges in the moral graph encode a subset of the conditional independence (CI) relations implied by a DAG model on $\G$; 
 the complete set of CI relations is given by the \emph{d-separations} that hold in $\G$~\cite{LAU96}[Section 3.2.2]; i.e., 
$X_i \independent X_j \mid X_S$ in $\PP$ whenever nodes $i$ and $j$ are d-separated in $\G$ given a set $S\subseteq [p]\setminus\{i,j\}$. The \emph{faithfulness assumption}
is the assertion that all CI relations entailed by $\PP$ are implied by d-separation in $\G$.

A standard approach for causal inference is to first infer CI relations from the observations on the nodes of $\G$ and then to use the CI relations to learn the DAG structure. However, several DAGs can encode the same CI relations and therefore, $\G$ can only be identified up to an equivalence class of DAGs, known as the {\it Markov equivalence class} (MEC) of $\G$, which we denote by $[\G]$. In \cite{VP90}, the author gave a graphical characterization of the members of $[\G]$; namely, two DAGs are in the same MEC if and only if they have the same {\it skeleton} (i.e., underlying undirected graph) and the same {\it v-structures} (i.e., induced subgraphs of the form $i\to j\leftarrow k$). $[\G]$ can be represented combinatorially by a partially directed graph with skeleton $\G$ and an arrow for those edges in $\G$ that have the same orientation in all members of $[\G]$. This is known as the \emph{CP-DAG} (or \emph{essential graph}) of $\G$~\cite{AMP97}.

Various algorithms have been developed for learning the MEC of $\G$ given observational data on the nodes, most notably the prominent GES~\cite{Meek1997} and PC algorithms~\cite{SGS00}. While GES is a score-based approach that greedily optimizes a score such as the BIC (Bayesian Information Criterion) over the space of MECs, the PC algorithm views causal inference as a constraint satisfaction problem with the constraints being the CI relations. In a two-stage approach, the PC algorithm first learns the skeleton of the underlying DAG and then determines the v-structures, both from the inferred CI relations. GES and the PC algorithms are provably consistent, meaning they output the correct MEC given an infinite amount of data, under the faithfulness assumption. Unfortunately, this assumption is very sensitive to hypothesis testing errors for inferring CI relations from data and violations are frequent especially in non-sparse graphs~\cite{URB13}. 
If the noise variables in a linear SEM with additive noise are non-Gaussian, the full causal DAG can be identified (as opposed to just its MEC)~\cite{SHHK06}, for example using the prominent LiNGAM algorithm~\cite{SHHK06}. Non-Gaussianity and sparsity of the underlying graph in the high-dimensional setting are crucial for consistency of LiNGAM. 

In this paper, we develop a two-stage approach, similar to the PC algorithm, for directly learning the difference between two linear SEMs with additive Gaussian noise on the DAGs $\G$ and $\mathcal{H}$. Note that naive algorithms that separately estimate $[\G]$ and $[\mathcal{H}]$ and take their differences can only identify edges that appeared/disappeared and cannot identify changes in edge weights (since the DAGs are not identifiable). Our algorithm overcomes this limitation. 
In addition, we show in Section~\ref{sec:theory} that instead of requiring the restrictive faithfulness assumption on both DAGs $\G$ and $\mathcal{H}$, consistency of our algorithm only requires assumptions on the (usually) smaller and sparser network of differences.

Let  $\PP^{(1)}$ and $\PP^{(2)}$ be a pair of linear SEMs with Gaussian noise defined by $(\B1, \epsilon^{(1)})$ and $(\B2, \epsilon^{(2)})$. Throughout, we make the simplifying assumption that both $\B1$ and $\B2$ are strictly upper triangular, i.e., that the underlying DAGs $\G^{(1)}$ and $\G^{(2)}$ share the same topological order. This assumption is reasonable for example in applications to genomics, since genetic interactions may appear or disappear, change edge weights, but generally do not change directions. For example, in biological pathways an upstream gene  does not generally become a downstream gene in different conditions. Hence $\B1 - \B2$ is also strictly upper triangular and we define the {\it difference-DAG} ({\it D-DAG}) of the two models by $\Delta := ([p], A_\Delta)$ with $i \to j \in A_\Delta$ if and only if $B^{(1)}_{ij} \neq B^{(2)}_{ij}$; i.e., an edge $i \rightarrow j$ in $\Delta$ represents a change in the causal effect of $i$ on $j$, including changes in the presence/absence of an effect as well as changes in edge weight. Given i.i.d.~samples from $\PP^{(1)}$ and $\PP^{(2)}$, our goal is to infer $\Delta$. Just like estimating a single causal DAG model, the D-DAG $\Delta$ is in general not completely identifiable, in which case we wish to identify the skeleton $\bar{\Delta}$ as well as a subset of arrows $\tilde{A}_\Delta$.

A simpler task 
is learning differences of undirected graphical models. Let $\Theta^{(1)}$ and $\Theta^{(2)}$ denote the precision matrices corresponding to $\PP^{(1)}$ and $\PP^{(2)}$. The support of $\Theta^{(k)}$ consists of the edges in the undirected graph (UG) models corresponding to $\PP^{(k)}$. We define the {\it difference-UG} ({\it D-UG}) by $\Delta_\Theta := ([p], E_{\Delta_\Theta})$, with $i - j \in E_{\Delta_\Theta}$ if and only if $\Theta^{(1)}_{ij} \neq \Theta^{(2)}_{ij}$ for $i \neq j$. Two recent methods that directly learn the difference of two UG models are KLIEP~\cite{LQG14} and DPM~\cite{ZCL14}; for a review and comparison of these methods see~\cite{LFS17}. 
These methods can be used as a first step towards estimating the D-DAG $\Delta$: 
under genericity assumptions, the formulae for $\Theta_{ij}$ in~\eqref{eq:theta} imply that if $\B1_{ij} \neq \B2_{ij}$ then $\Theta^{(1)}_{ij}\neq \Theta^{(2)}_{ij}$. Hence, the skeleton of $\Delta$ is a subgraph of $\Delta_\Theta$, i.e., $\bar{\Delta}\subseteq\Delta_\Theta$. In the following section we present our algorithm showing how to obtain $\bar{\Delta}$ and determine some of the edge directions in $\Delta$. We end this section with a piece of notation needed for introducing our algorithm; we define the \textit{set of changed nodes} to be $S_\Theta := \big\{i \mid \exists j \in [p] \;\textrm{such that}\; \Theta_{i,j}^{(1)} \neq \Theta_{i,j}^{(2)} \big\}$.


\section{Difference Causal Inference Algorithm}\label{sec:method}

\begin{algorithm}[!b]
	\caption{Difference Causal Inference (DCI) algorithm}
	\label{alg:main}
	\begin{algorithmic}
		\STATE {\bfseries Input:} Sample data $\hat{X}^{(1)}$, $\hat{X}^{(2)}$.
		\STATE {\bfseries Output:} Estimated skeleton $\bar{\Delta}$ and arrows $\tilde{A}_\Theta$ of the D-DAG $\Delta$.
		\vspace{0.1cm}
		\STATE Estimate the D-UG $\Delta_\Theta$ and $S_\Theta$; use Algorithm~\ref{alg:skel} to estimate $\bar{\Delta}$; use Algorithm~\ref{alg:dir} to estimate $\tilde{A}_\Delta$.
	\end{algorithmic}
\end{algorithm}

In Algorithm~\ref{alg:main} we present our \emph{Difference Causal Inference} (\emph{DCI}) algorithm for directly learning the difference between two linear SEMs with additive Gaussian noise given i.i.d.~samples from each model. Our algorithm consists of a two-step approach similar to the PC algorithm. The first step, described in Algorithm~\ref{alg:skel}, estimates the skeleton of the D-DAG by removing edges one-by-one. Algorithm~\ref{alg:skel} takes $\Delta_\Theta$ and $S_\Theta$ as input. In the high-dimensional setting, KLIEP can be used to estimate $\Delta_\Theta$ and $S_\Theta$.
%
%
For completeness, in the Supplementary Material we also provide a constraint-based method that consistently estimates $\Delta_\Theta$ and $S_\Theta$ in the low-dimensional setting for general additive noise models. Finally, $\Delta_\Theta$ can also simply be chosen to be the complete graph with $S_\Theta=[p]$. These different initiations of Algorithm~\ref{alg:skel} are compared via simulations in Section~\ref{sec:eval}. The second step of DCI, described in Algorithm~\ref{alg:dir}, infers some of the edge directions in the D-DAG.  While the PC algorithm uses CI tests based on the partial correlations for inferring the skeleton and for determining edge orientations, DCI tests the invariance of certain \emph{regression coefficients} across the two data sets in the first step and the invariance of certain \emph{regression residual variances} in the second step. These are similar to the regression invariances used in~\cite{GSKZ17} and are introduced in the following definitions.


\begin{definition}
Given $i, j \in [p]$ and $S \subseteq [p] \setminus \{i,j\}$, let $M := \{i\} \cup S$ and let $\beta_{M}^{(k)}$ be the best linear predictor of $X_j^{(k)}$ given $X_M^{(k)}$, i.e., the minimizer of $\mathbb{E}[(X_j^{(k)}-(\beta_M^{(k)})^T X_M^{(k)})^2]$. We define the \emph{regression coefficient} $\beta_{i,j\mid S}^{(k)}$ to be the entry in $\beta_M^{(k)}$ corresponding to $i$. 
\end{definition}

\begin{definition}
For $j \in [p]$ and $S \subseteq [p] \setminus \{j\}$, we define $(\sigma_{j \mid S}^{(k)})^2$ to be the \emph{variance of the regression residual} when regressing $X_j^{(k)}$ onto the random vector $X_S^{(k)}$.
\end{definition}

Note that in general $\beta_{i,j\mid S}^{(k)}\neq \beta_{j,i\mid S}^{(k)}$. Each entry in $B^{(k)}$ can be interpreted as a regression coefficient, namely $B_{ij}^{(k)} = \beta^{(k)}_{i,j\mid (\pa^{(k)}(j)\setminus\{i\})},$ where $\pa^{(k)}(j)$ denotes the parents of node $j$ in $\mathcal{G}^{(k)}$. Thus, when $B_{ij}^{(1)}=B_{ij}^{(2)}$, 
then there exists a set $S$ such that $\beta_{i,j \mid S}^{(k)}$ stays invariant across $k = \{1,2\}$. This motivates using invariances between regression coefficients to determine the skeleton of the D-DAG. 
For orienting edges, observe that when $\sigma_j^{(k)}$ stays invariant across two conditions, $\sigma_{j \mid S}^{(k)}$ would also stay invariant if $S$ is chosen such that $S = \pa^{(1)}(j) \cup \pa^{(2)}(j)$.
This motivates using invariances of residual variances to discover the parents of node $j$ and assign orientations afterwards.
Similar to~\cite{GSKZ17}
we use hypothesis tests based on the F-test for testing the invariance between regression coefficients and residual variances.
See the Supplementary Material for details regarding the construction of these hypothesis tests, the derivation of their asymptotic distribution, and an example outlining the difference of this approach to~\cite{GSKZ17} for invariant structure learning.

\begin{algorithm}[tb]
   \caption{Estimating skeleton of the D-DAG}
   \label{alg:skel}
\begin{algorithmic}
   \STATE {\bfseries Input:} Sample data $\hat{X}^{(1)}$, $\hat{X}^{(2)}$, estimated D-UG $\Delta_\Theta$, estimated set of changed nodes $S_\Theta$.
   \STATE {\bfseries Output:} Estimated skeleton $\bar{\Delta}$.
    \vspace{0.1cm}
   \STATE Set $\bar{\Delta} := \Delta_\Theta$;
   \FOR{each edge $i - j$ in $\bar{\Delta}$}
   \STATE If $\exists S \subseteq S_\Theta \setminus \{i,j\}$ such that $\beta_{i,j\mid S}^{(k)}$ is invariant across $k = \{1,2\}$, delete $i - j$ in $\bar{\Delta}$ and continue to the next edge. Otherwise, continue.
   \ENDFOR
\end{algorithmic}
\end{algorithm}

\begin{algorithm}[tb]
	\caption{Directing edges in the D-DAG}
	\label{alg:dir}
	\begin{algorithmic}
		\STATE {\bfseries Input:} Sample data $\hat{X}^{(1)}$, $\hat{X}^{(2)}$, estimated set of changed nodes $S_\Theta$, estimated skeleton $\bar{\Delta}$.
		\STATE {\bfseries Output:} Estimated set of arrows $\tilde{A}_\Delta$.
		\vspace{0.1cm}
		\STATE Set $\tilde{A}_\Delta := \emptyset$;
		\FOR{each node $j$ incident to at least one undirected edge in $\bar{\Delta}$}
		\STATE If $\exists S \subseteq S_\Theta \setminus \{j\}$ such that $\sigma_{j\mid S}^{(k)}$ is invariant across $k = \{1,2\}$, add $i \to j$ to $\tilde{A}_\Delta$ for all $i \in S$, and add $j \to i$ to $\tilde{A}_\Delta$ for all $i \not\in S$ and continue to the next node. Otherwise, continue.
		\ENDFOR
		\STATE Orient as many undirected edges as possible via graph traversal using the following rule:
		\STATE \quad\quad Orient $i - j$ into $i \to j$ whenever there is a chain $i \to \ell_1 \to \cdots \to \ell_t \to j$.
	\end{algorithmic}
\end{algorithm}

\begin{example}
	\label{ex:DCI}
We end this section with a 4-node example showing how the DCI algorithm works. Let $B^{(1)}$ and $B^{(2)}$ be the autoregressive matrices defined by the edge weights of $\G^{(1)}$ and $\G^{(2)}$ and let the noise variances satisfy the following invariances:
\vspace{-0.3cm}
\begin{center}
\begin{tabular}{c c c c}
\includegraphics[scale=0.8]{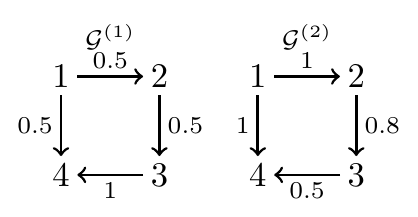}& \includegraphics[scale=1]{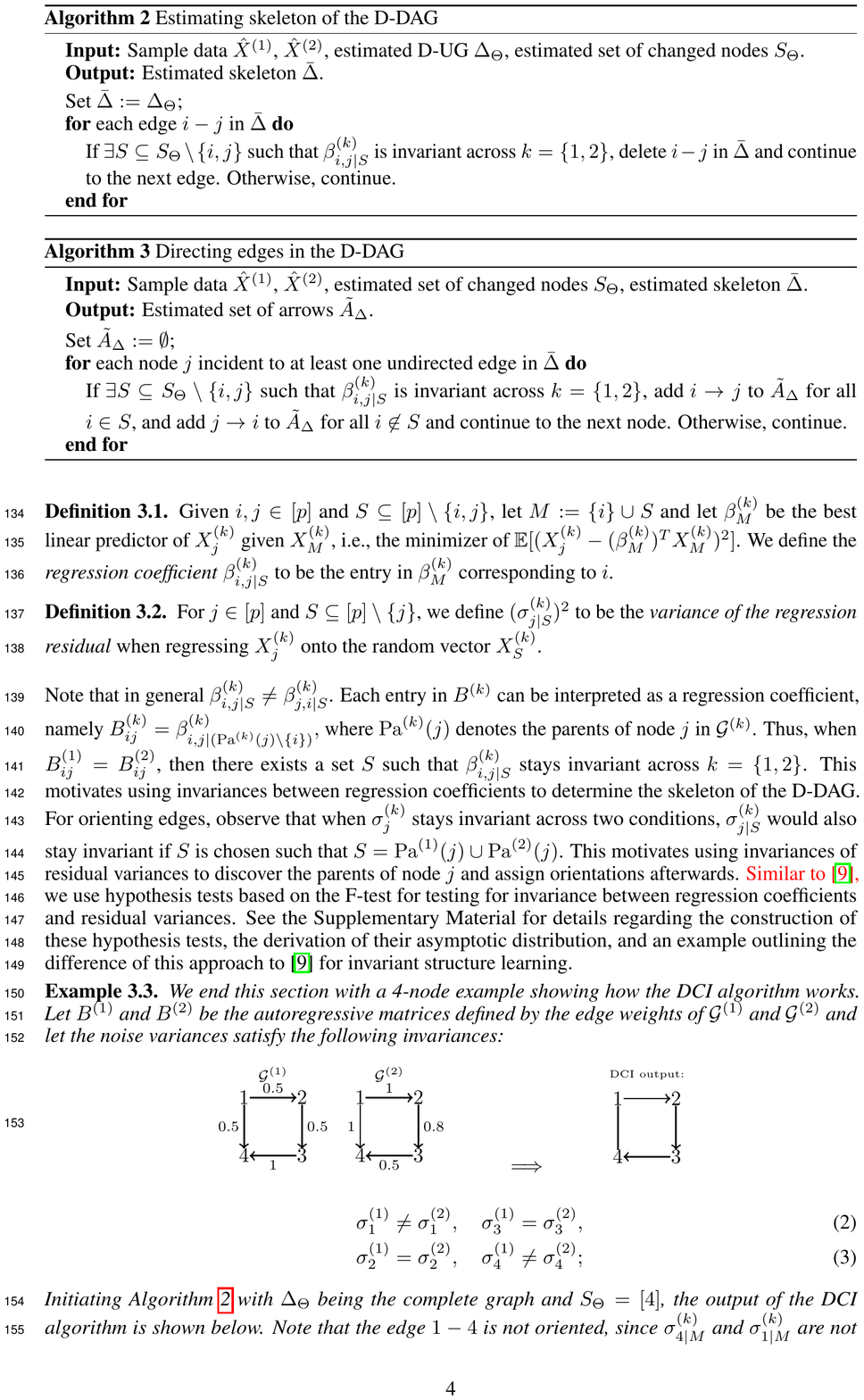} & \includegraphics[scale=1]{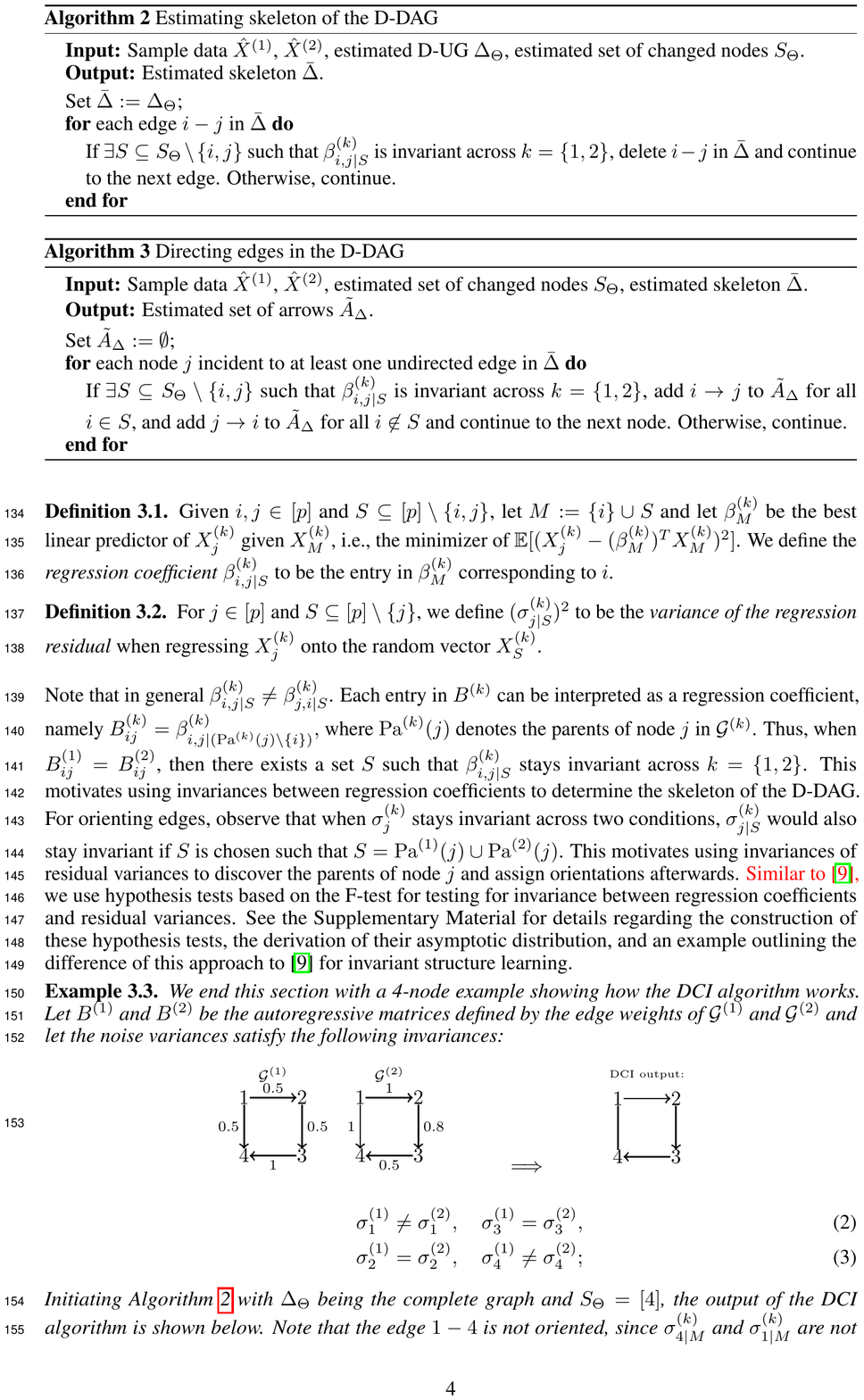}&
	 \includegraphics[scale=0.85]{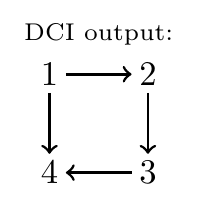}
\end{tabular}
\end{center}
\vspace{-0.3cm}
Initiating Algorithm~\ref{alg:skel} with $\Delta_{\Theta}$ being the complete graph and $S_\Theta=[4]$, the output of the DCI algorithm is shown above. 
\end{example}

\section{Consistency of DCI}\label{sec:theory}

 The DCI algorithm is \emph{consistent} if it outputs a partially oriented graph $\hat\Delta$ that has the same skeleton as the true D-DAG and whose oriented edges are all correctly oriented. Just as methods for estimating a single DAG require assumptions on the underlying model (e.g. the faithfulness assumption) to ensure consistency, our method for estimating the D-DAG requires assumptions on relationships between the two underlying models. To define these assumptions it is helpful to view $(\sigma_j^{(k)})_{j\in[p]}$ and the non-zero entries $(B_{ij}^{(k)})_{(i,j)\in A^{(k)}}$ as \emph{variables} or {\it indeterminates} and each entry of $\Theta^{(k)}$ as a {\it rational function}, i.e., a fraction of two {\it polynomials} in the variables $B_{ij}^{(k)}$ and $\sigma_j^{(k)}$ as  defined in~\eqref{eq:theta}. Using Schur complements one can then similarly express $\beta_{v,w\mid S}^{(k)}$ and $(\sigma_{w \mid S}^{(k)})^2$ as a rational function in the entries of $\Theta^{(k)}$ and hence as a rational function in the variables $(B_{ij}^{(k)})_{(i,j)\in A^{(k)}}$ and $(\sigma_j^{(k)})_{j\in[p]}$. The exact formulae are given in the Supplementary Material.  

 Clearly, if 
 $B_{ij}^{(1)} = B_{ij}^{(2)}$ $\forall (i,j)\textrm{ and }  \sigma_j^{(1)} = \sigma_j^{(2)}$ $\forall j\in[p],$
 then $\beta_{v,w\mid S}^{(1)} = \beta_{v,w\mid S}^{(2)}$ and $\sigma_{w \mid S}^{(1)} = \sigma_{w \mid S}^{(2)}$ for all $v,w,S$. For consistency of the DCI algorithm we assume that the converse is true as well, namely that \emph{differences} in $B_{ij}$ and $\sigma_j$ in the two distributions are not ``cancelled out'' by changes in other variables and result in \emph{differences} in the regression coefficients and regression residual variances. This allows us to deduce invariance patterns of the autoregressive matrix $\Bk$ from invariance patterns of the regression coefficients and residual variances, and hence differences of the two causal DAGs.\footnote{This is similar to the faithfulness assumption in the Gaussian setting, where partial correlations are used for CI testing; the partial correlations are rational functions in the variables  $B_{ij}^{(k)}$ and $\sigma_j^{(k)}$ and the faithfulness assumption asserts that if a partial correlation $\rho_{ij\mid S}$ is zero then the corresponding rational function is identically equal to zero and hence $B_{ij}=0$~\cite{LUS14}.} 

\begin{assumption}
	\label{ass:adj_faith}
For any choice of $\,i, j \in S_\Theta$, if $\B1_{ij} \neq \B2_{ij}$ then for all $S \subseteq S_\Theta \setminus \{i,j \}$~it~holds~that
\vspace{-0.3cm}
\begin{align*}
\beta_{i,j\mid S}^{(1)} \neq \beta_{i,j\mid S}^{(2)} \;\textrm{and}\;  \beta_{j,i\mid S}^{(1)} \neq \beta_{j,i\mid S}^{(2)}.
\end{align*}
\end{assumption}
\vspace{0.1cm}

\begin{assumption}
	\label{ass:ori_faith}
For any choice of $i, j \in S_\Theta$ it holds that
\begin{enumerate}
	\vspace{-0.2cm}
\item if $\B1_{ij} \neq \B2_{ij}$, then $\forall S \subseteq S_\Theta \setminus \{i,j \}$,  $\sigma_{j \mid S}^{(1)} \neq \sigma_{j \mid S}^{(2)} \quad\textrm{and}\quad  \sigma_{i \mid S \cup \{j\}}^{(1)} \neq \sigma_{i \mid S \cup \{j\}}^{(2)}$.
\vspace{-0.2cm}
\item if $\sigma_j^{(1)} \neq \sigma_j^{(2)}$, then $\sigma_{j \mid S}^{(1)} \neq \sigma_{j \mid S}^{(2)}$ for all $S \subseteq S_\Theta \setminus \{j \}$.
\end{enumerate}
\end{assumption}

Assumption~\ref{ass:adj_faith} ensures that the skeleton of the D-DAG is inferred correctly, whereas Assumption~\ref{ass:ori_faith} ensures that the arrows returned by the DCI algorithm are oriented correctly. These assumptions are the equivalent of the \emph{adjacency-faithfulness} and \emph{orientation-faithfulness} assumptions that ensure consistency of the PC algorithm for estimating the MEC of a causal DAG~\cite{RSZ06}. 

We now provide our main results, namely consistency of the DCI algorithm. For simplicity we here discuss the consistency guarantees when Algorithm~\ref{alg:skel} is initialized with $\Delta_\Theta$ being the complete graph and $S_\Theta = [p]$. However, in practice we recommend initialization using KLIEP (see also Section 5) to avoid performing an unnecessarily large number of conditional independence tests. The consistency guarantees for such an initialization including a method for learning the D-DAG in general additive noise models (that are not necessarily Gaussian) is provided in the Supplementary Material.


\begin{theorem}\label{thm:skel}
Given Assumption~\ref{ass:adj_faith}, Algorithm~\ref{alg:skel} is consistent in estimating the skeleton of the D-DAG $\Delta$. 
\end{theorem}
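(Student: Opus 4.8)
The plan is to establish the population-level statement — that with oracle knowledge of which population regression coefficients agree across the two models, Algorithm~\ref{alg:skel}, initialized with $\Delta_\Theta$ the complete graph and $S_\Theta=[p]$, returns exactly the skeleton $\bar\Delta$ of $\Delta$ — after which finite-sample consistency follows from consistency of the F-tests used to certify invariance. The first step is a bookkeeping observation: the deletion test for an edge $\{i,j\}$ ranges only over $S\subseteq S_\Theta\setminus\{i,j\}=[p]\setminus\{i,j\}$ and never consults the current edge set of $\bar\Delta$, so the for-loop is order-independent and its output is simply ``retain $\{i,j\}$ iff there is no $S\subseteq[p]\setminus\{i,j\}$ with $\beta^{(1)}_{i,j\mid S}=\beta^{(2)}_{i,j\mid S}$''. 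It therefore suffices to prove two complementary claims about each node pair: (i)~\emph{completeness} — if $\{i,j\}$ is an edge of $\bar\Delta$ then no certifying $S$ exists; (ii)~\emph{soundness} — if $\{i,j\}$ is not an edge of $\bar\Delta$ then a certifying $S$ does exist.

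Completeness will be essentially a restatement of Assumption~\ref{ass:adj_faith}. If $\{i,j\}\in\bar\Delta$, then, labelling the endpoints so that $i<j$ in the shared topological order, we have $\B{1}_{ij}\neq\B{2}_{ij}$; since $i,j\in S_\Theta=[p]$, Assumption~\ref{ass:adj_faith} gives $\beta^{(1)}_{i,j\mid S}\neq\beta^{(2)}_{i,j\mid S}$ for every $S\subseteq[p]\setminus\{i,j\}$, so the edge is never deleted. (The symmetric half of the assumption, concerning $\beta_{j,i\mid S}$, covers the opposite labelling and makes the conclusion robust to which endpoint the implementation treats as regressand.)

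The substance of the proof, and where I expect the main obstacle, is soundness, which uses only the structural form of the two SEMs together with the shared topological order. Given a non-edge $\{i,j\}$ with $i<j$, so $\B{1}_{ij}=\B{2}_{ij}$ (possibly both zero), I would exhibit the explicit certifying set $S:=\bigl(\pa^{(1)}(j)\cup\pa^{(2)}(j)\bigr)\setminus\{i\}$, and put $M:=\{i\}\cup S$. Every element of $M$ strictly precedes $j$ in the common order — parents of $j$ in either DAG do, and $i<j$ — hence every element of $M$ is a non-descendant of $j$ in both $\G^{(1)}$ and $\G^{(2)}$, while $M\supseteq\pa^{(k)}(j)$ for $k=1,2$. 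For such an $M$, the residual $X_j^{(k)}-\sum_{\ell\in\pa^{(k)}(j)}\B{k}_{\ell j}X_\ell^{(k)}$ equals exactly $\epsilon_j^{(k)}$, and $\epsilon_j^{(k)}$ is uncorrelated with $X_M^{(k)}$ because each $X_\ell^{(k)}$ with $\ell\in M$ is a linear combination of noise coordinates indexed by $\{\ell\}\cup\mathrm{anc}^{(k)}(\ell)$, a set not containing $j$. The normal equations then identify the best linear predictor of $X_j^{(k)}$ from $X_M^{(k)}$ as the vector carrying weight $\B{k}_{\ell j}$ on each $\ell\in\pa^{(k)}(j)$ and weight $0$ on all other coordinates of $M$; its $i$-th entry is $\B{k}_{ij}$, i.e.\ $\beta^{(k)}_{i,j\mid S}=\B{k}_{ij}$ for $k=1,2$. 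Since $\B{1}_{ij}=\B{2}_{ij}$, this $S$ certifies invariance and Algorithm~\ref{alg:skel} deletes $\{i,j\}$.

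The delicate point in the soundness argument is the design of $S$: it must simultaneously contain the parent set of $j$ in \emph{both} graphs, so that the regression residual collapses to $\epsilon_j^{(k)}$ for $k=1$ and $k=2$ at once, yet it must stay within the non-descendants of $j$ so that the normal equations yield the clean identification above — and it is exactly the shared-topological-order hypothesis that makes these two requirements compatible (and that forces the regressand $j$ to be the later endpoint of the edge). Combining (i) and (ii), the set of retained edges equals the edge set of $\bar\Delta$; since every node pair falls under one of the two claims, there are no further cases and the theorem follows.
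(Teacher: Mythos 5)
Your proposal is correct, and its core is the same key idea the paper identifies as the main ingredient: for a pair with $\B1_{ij}=\B2_{ij}$, condition on $S=\bigl(\pa^{(1)}(j)\cup\pa^{(2)}(j)\bigr)\setminus\{i\}$ so that the regression collapses to $\beta^{(k)}_{i,j\mid S}=\Bk_{ij}$, while the converse direction is read off directly from Assumption~\ref{ass:adj_faith}. Where you differ from the paper's written proof is in scope and machinery. The supplementary proof establishes the stronger statement in which Algorithm~\ref{alg:skel} is initialized at an estimated difference-UG with a possibly small $S_\Theta$; there the parent set need not lie inside $S_\Theta$, so the paper invokes an extra assumption (Assumption~\ref{ass:prec}) and splits into two cases: when $\sigma_j^{(1)}\neq\sigma_j^{(2)}$ the parents are forced into $S_\Theta$ and your argument applies (this is their Case~2, via the Markov property), whereas when $\sigma_j^{(1)}=\sigma_j^{(2)}$ they instead condition on $S_{\Theta^*}\setminus\{i,j\}$ for the marginal precision matrix of $X_{1:j}$ and argue invariance through Schur complements (Lemmas~\ref{lem:skel1} and~\ref{lem:skel3}). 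Because you work with the complete-graph initialization $\Delta_\Theta$ complete and $S_\Theta=[p]$ — which is exactly the setting the main-text theorem discusses — the parent set is always an admissible conditioning set, so your single uniform certificate via the normal equations replaces the case split and the Schur-complement lemmas entirely; your observation that the shared topological order makes the certificate well-defined (all of $M$ precedes $j$, hence is uncorrelated with $\epsilon_j^{(k)}$) is the same structural fact the paper uses implicitly. In short: your argument is more elementary and self-contained for the theorem as stated, while the paper's buys consistency under the sparser difference-UG initialization that is used in practice.
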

The proof is given in the Supplementary Material. The main ingredient is showing that if $B_{ij}^{(1)} = B_{ij}^{(2)}$, then there exists a conditioning set $S \subseteq S_\Theta \setminus \{i,j\}$ such that $\beta_{i,j \mid S}^{(1)}=\beta_{i,j \mid S}^{(2)}$, namely 
the parents of node $j$ in both DAGs excluding node $i$. 
Next, we provide consistency guarantees for Algorithm~\ref{alg:dir}.

\begin{theorem}\label{thm:dir}
Given Assumption~\ref{ass:ori_faith}, all arrows $\tilde{A}_{\Delta}$ output by Algorithm~\ref{alg:dir} are correctly oriented. In particular, if $\sigma_i^{(k)}$ is invariant across $k = \{1,2\}$, then all edges adjacent to $i$ are oriented.
\end{theorem}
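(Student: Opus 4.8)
The plan is to prove Theorem~\ref{thm:dir} in two parts, mirroring its two claims, and in each part to lean on Algorithm~\ref{alg:dir} together with the two halves of Assumption~\ref{ass:ori_faith} and the skeleton-correctness already guaranteed by Theorem~\ref{thm:skel}. Throughout I would use the observation, already noted before the algorithm, that if $S = \pa^{(1)}(j)\cup\pa^{(2)}(j)$ then $\sigma_{j\mid S}^{(k)}$ equals $\sigma_j^{(k)}$, together with the structural fact that $\bar\Delta$ (correctly estimated) connects $j$ only to nodes $i$ with $B_{ij}^{(1)}\neq B_{ij}^{(2)}$ or $B_{ji}^{(1)}\neq B_{ji}^{(2)}$.

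For the correctness of the arrows, fix a node $j$ for which the algorithm finds some $S\subseteq S_\Theta\setminus\{j\}$ with $\sigma_{j\mid S}^{(1)}=\sigma_{j\mid S}^{(2)}$, and let $i$ be a node joined to $j$ by an (as yet) undirected edge in $\bar\Delta$. I would argue two cases. If $i\in S$, we must show $i\to j$ is the correct orientation, i.e. $B_{ij}^{(1)}\neq B_{ij}^{(2)}$ (so the edge is ``into $j$''); suppose for contradiction $B_{ij}^{(1)}=B_{ij}^{(2)}$ — then since $i$ is adjacent to $j$ in $\bar\Delta$ we must have $B_{ji}^{(1)}\neq B_{ji}^{(2)}$, and I would derive a contradiction with the invariance $\sigma_{j\mid S}^{(1)}=\sigma_{j\mid S}^{(2)}$ using part~1 of Assumption~\ref{ass:ori_faith} applied with the roles of $i,j$ swapped and an appropriate conditioning set. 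The cleanest route: part~1 says that if $B_{ji}^{(1)}\neq B_{ji}^{(2)}$ then for all $T\subseteq S_\Theta\setminus\{i,j\}$ both $\sigma_{i\mid T}^{(1)}\neq\sigma_{i\mid T}^{(2)}$ and $\sigma_{j\mid T\cup\{i\}}^{(1)}\neq\sigma_{j\mid T\cup\{i\}}^{(2)}$; taking $T = S\setminus\{i\}$ gives $\sigma_{j\mid S}^{(1)}\neq\sigma_{j\mid S}^{(2)}$, the desired contradiction. Symmetrically, if $i\notin S$ we want $j\to i$, i.e. $B_{ji}^{(1)}\neq B_{ji}^{(2)}$; suppose not — then $B_{ij}^{(1)}\neq B_{ij}^{(2)}$ (adjacency again), and part~1 applied to the pair $(i,j)$ with conditioning set $S$ gives $\sigma_{j\mid S}^{(1)}\neq\sigma_{j\mid S}^{(2)}$, again contradicting the invariance the algorithm found. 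Finally, the closing graph-traversal rule only orients $i-j$ into $i\to j$ when there is a directed chain from $i$ to $j$ through already-correctly-oriented edges; since $\B1-\B2$ is strictly upper triangular such a chain forces $i<j$ in the common topological order, so $j\to i$ would create a cycle and is impossible — hence $i\to j$ is correct. This shows every arrow in $\tilde A_\Delta$ is correctly oriented.

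For the second claim, suppose $\sigma_i^{(1)}=\sigma_i^{(2)}$. Take $S = \bigl(\pa^{(1)}(i)\cup\pa^{(2)}(i)\bigr)\cap S_\Theta$; I would first check $S\subseteq S_\Theta\setminus\{i\}$ and that regressing $X_i^{(k)}$ on this set recovers the structural equation for $i$, so $\sigma_{i\mid S}^{(k)}=\sigma_i^{(k)}$ and hence $\sigma_{i\mid S}^{(1)}=\sigma_{i\mid S}^{(2)}$ — one must be slightly careful that parents outside $S_\Theta$ do not matter, which follows because a parent $\ell$ of $i$ with $B_{\ell i}^{(1)}=B_{\ell i}^{(2)}$ that is not in $S_\Theta$ can be marginalized without changing the residual (its coefficient is the same in both models). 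Thus when the loop of Algorithm~\ref{alg:dir} reaches node $i$ it finds such an $S$ and orients every edge incident to $i$: for each neighbour $\ell$ of $i$ in $\bar\Delta$, either $\ell\in S$, giving $\ell\to i$, or $\ell\notin S$, giving $i\to \ell$. Combined with the first part, all these orientations are correct, so every edge adjacent to $i$ is oriented.

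The main obstacle I anticipate is the bookkeeping in the first part: making sure that the conditioning set the algorithm happened to pick is exactly of the form to which part~1 of Assumption~\ref{ass:ori_faith} applies (the assumption quantifies over \emph{all} $S$, which helps, but one still has to match $S\setminus\{i\}$ versus $S$ and check membership in $S_\Theta\setminus\{i,j\}$), and handling the interaction between the residual-variance test and the final graph-traversal step — in particular arguing that traversal never introduces an incorrectly oriented edge, which needs the strict upper-triangularity of $\B1-\B2$ to rule out the reverse orientation. The residual-variance identity $\sigma_{j\mid \pa^{(1)}(j)\cup\pa^{(2)}(j)}^{(k)}=\sigma_j^{(k)}$ itself is routine from the SEM definition and I would not belabor it.
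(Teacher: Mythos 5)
Your first part---correctness of every arrow the algorithm outputs---is essentially the paper's argument (its Lemma on orientations): the two cases $i\in S$ and $i\notin S$ are handled by applying part~1 of Assumption~\ref{ass:ori_faith} to the appropriately ordered pair and conditioning set, exactly as you do, and your treatment of the final chain-traversal step via acyclicity/the shared topological order also matches the paper. (Like the paper, you implicitly use that $\bar\Delta$ is the correct skeleton, i.e.\ Theorem~\ref{thm:skel}.)

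The gap is in your second part, the existence of a certifying set when $\sigma_i^{(1)}=\sigma_i^{(2)}$. Your choice $S=\bigl(\pa^{(1)}(i)\cup\pa^{(2)}(i)\bigr)\cap S_\Theta$, justified by ``a parent outside $S_\Theta$ has the same coefficient in both models and can be marginalized without changing the residual,'' is not valid in general: the contribution of an omitted parent $\ell$ to $\sigma_{i\mid S}^2$ is $B_{\ell i}^2\,\var(X_\ell\mid X_S)$ (plus cross terms), and $\var(X_\ell\mid X_S)$ can differ across the two models even when $\ell\notin S_\Theta$, because upstream changed nodes that are \emph{not parents of $i$} are excluded from your $S$. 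Concretely, take nodes $1,\dots,5$ with $1\to 2$ (weight changed), $1\to 3$, $2\to 4$, $3\to 4$ (invariant), $4\to 5$ (weight changed), all noise variances invariant. Then $S_\Theta=\{1,2,4,5\}$ and $3\notin S_\Theta$, so for $i=4$ your set is $S=\{2\}$; but $\sigma_{4\mid\{2\}}^2=\sigma_4^2+B_{34}^2\,\var(X_3\mid X_2)$ with $\var(X_3\mid X_2)=B_{13}^2\sigma_1^2+\sigma_3^2-B_{13}^2B_{12}^2\sigma_1^4/(B_{12}^2\sigma_1^2+\sigma_2^2)$, which changes with $B_{12}$; so your witness fails (the algorithm's conclusion is rescued only because a \emph{different} set, $\{1,2\}$, certifies invariance). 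The paper avoids this by choosing $S=S_{\Theta^*}\setminus\{i\}$, where $\Theta^*$ is the precision matrix of the marginal vector $X_{1:i}$ (in which $i$ has no children, so $\Theta^{*}_{ii}=\sigma_i^{-2}$ is invariant), and proving invariance of the Schur complement entry because all rows of $\Theta^*$ outside $S_{\Theta^*}$ are invariant; it additionally needs Assumption~\ref{ass:prec} to guarantee $S_{\Theta^*}\subseteq S_\Theta$ when $S_\Theta$ is estimated. Your argument is fine only in the simplified initialization $S_\Theta=[p]$ discussed in the main text, where your set degenerates to all parents and $\sigma_{i\mid S}^{(k)}=\sigma_i^{(k)}$ exactly; for the general case you would need the marginal-precision/Schur-complement construction (or some other correct witness), not the parents-intersect-$S_\Theta$ set.
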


Similar to the proof of Theorem~\ref{thm:skel}, the proof follows by interpreting the rational functions corresponding to regression residual variances in terms of d-connecting paths in $\mathcal{G}^{(k)}$ and is given in the Supplementary Material. It is important to note that as a direct corollary to Theorem~\ref{thm:dir} we obtain sufficient conditions for full identifiability of the D-DAG (i.e., all arrows) using the DCI algorithm.

\begin{corollary}\label{cor:dir}
	Given Assumptions~\ref{ass:adj_faith} and~\ref{ass:ori_faith}, and assuming that the error variances are the same across the two distributions, i.e.~$\Omega^{(1)}=\Omega^{(2)}$, the DCI algorithm outputs the D-DAG $\Delta$.
\end{corollary}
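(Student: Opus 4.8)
The plan is to chain Theorem~\ref{thm:skel} and Theorem~\ref{thm:dir} back to back, using the extra hypothesis $\Omega^{(1)} = \Omega^{(2)}$ only to guarantee that Algorithm~\ref{alg:dir} leaves no edge undirected. Recall that under the stated initialization $\Delta_\Theta$ is the complete graph and $S_\Theta = [p]$, so the D-UG step of DCI contributes nothing and it suffices to analyze Algorithms~\ref{alg:skel} and~\ref{alg:dir}. By Theorem~\ref{thm:skel}, Assumption~\ref{ass:adj_faith} implies that the graph $\bar{\Delta}$ returned by Algorithm~\ref{alg:skel} is exactly the skeleton of the true D-DAG $\Delta$; hence it remains only to show that Algorithm~\ref{alg:dir} assigns a direction to every edge of $\bar{\Delta}$ and that each assigned direction agrees with $\Delta$.

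Correctness of the assigned orientations is immediate from Theorem~\ref{thm:dir}: under Assumption~\ref{ass:ori_faith}, every arrow placed into $\tilde{A}_\Delta$, whether by the residual-variance invariance test or by the final graph-traversal rule, coincides with the corresponding arrow of $\Delta$, so in particular two iterations of the loop can never orient the same edge in opposite directions. For completeness of the orientation, note that $\Omega^{(1)} = \Omega^{(2)}$ gives $\sigma_j^{(1)} = \sigma_j^{(2)}$ for every $j \in [p]$. Fix an edge $i - j$ of $\bar{\Delta}$ and suppose without loss of generality that Algorithm~\ref{alg:dir} reaches $i$ before $j$ in its loop. If $i - j$ is already oriented when $i$ is processed we are done; otherwise $i$ is incident to an undirected edge, so $i$ is processed, and since $\sigma_i$ is invariant the ``in particular'' clause of Theorem~\ref{thm:dir} applies: the set $S = \pa^{(1)}(i) \cup \pa^{(2)}(i) \subseteq [p] \setminus \{i\} = S_\Theta \setminus \{i\}$ yields an invariant residual variance, so the test at node $i$ succeeds (with some conditioning set), and the algorithm then assigns a direction to every edge incident to $i$, in particular to $i - j$. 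Running over all nodes, the loop terminates with $\bar{\Delta}$ fully oriented.

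Putting the pieces together, DCI returns a fully directed graph whose skeleton equals that of $\Delta$ and whose arrows all coincide with those of $\Delta$; that is, DCI returns $\Delta$. The only step requiring genuine care is the completeness argument: one must check that the loop of Algorithm~\ref{alg:dir} actually visits, for each edge of $\bar{\Delta}$, an endpoint whose residual-variance test still succeeds. This is where $\Omega^{(1)} = \Omega^{(2)}$ is essential, since it ensures that every node retains an invariant unconditional noise variance throughout the run, so the conditioning set $\pa^{(1)}(\cdot) \cup \pa^{(2)}(\cdot)$ witnessing invariance is always available; everything else is a direct appeal to Theorems~\ref{thm:skel} and~\ref{thm:dir}.
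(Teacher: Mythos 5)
Your proposal is correct and follows essentially the same route as the paper: the corollary is obtained by combining Theorem~\ref{thm:skel} (correct skeleton under Assumption~\ref{ass:adj_faith}) with Theorem~\ref{thm:dir}, whose ``in particular'' clause guarantees that when $\Omega^{(1)}=\Omega^{(2)}$ every node has an invariant noise variance and hence every edge of $\bar{\Delta}$ gets (correctly) oriented. Your additional bookkeeping about loop order and the witnessing conditioning set $\pa^{(1)}(\cdot)\cup\pa^{(2)}(\cdot)$ just spells out details the paper subsumes in the proofs of Lemmas~\ref{lem:dir1} and~\ref{lem:dir2}.
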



In addition, we conjecture that Algorithm~\ref{alg:dir} is \emph{complete}, i.e., that it directs all edges that are identifiable in the D-DAG. We end this section with two remarks, namely regarding the sample complexity of the DCI algorithm and an evaluation of how restrictive Assumptions~\ref{ass:adj_faith} and~\ref{ass:ori_faith} are.

\vspace{0.1cm}

\begin{remark} [Sample complexity of DCI] \label{rk:complexity} 
For constraint-based methods such as the PC or DCI algorithms, the sample complexity is determined by the number of hypothesis tests performed by the algorithm~\cite{KB07}. In the high-dimensional setting, the number of hypothesis tests performed by the PC algorithm scales as $\mathcal{O}(p^s)$, where $p$ is the number of nodes and $s$ is the maximum degree of the DAG, thereby implying severe restrictions on the {\it sparsity} of the DAG given a reasonable sample size. Meanwhile, the number of hypothesis tests performed by the DCI algorithm scales as $\mathcal{O}(|\Delta_\Theta| 2^{| S_\Theta | - 1})$ and hence does not depend on the degree of the two DAGs. Therefore, even if the two DAGs $\G^{(1)}$ and $\G^{(2)}$ are high-dimensional and highly connected, the DCI algorithm is consistent and has a better sample complexity (as compared to estimating two DAGs separately) as long as the differences between $\G^{(1)}$ and $\G^{(2)}$ are {\it sparse}, i.e., $| S_\Theta |$ is small compared to $p$ and $s$. \qed
\end{remark}

\vspace{0.1cm}


\begin{remark}[Strength of Assumptions~\ref{ass:adj_faith} and~\ref{ass:ori_faith}] \label{rk:assumption}
Since faithfulness, a standard assumption for consistency of causal inference algorithms to estimate an MEC, is known to be restrictive~\cite{URB13}, it is of interest to compare Assumptions~\ref{ass:adj_faith} and~\ref{ass:ori_faith} to the faithfulness assumption of $\,\PP^{(k)}$ with respect to $\G^{(k)}$ for $k\in\{1,2\}$. In the Supplementary Material we provide examples showing that Assumptions~\ref{ass:adj_faith} and~\ref{ass:ori_faith} do not imply the faithfulness assumption on the two distributions and vice-versa. However, in the finite sample regime 
we conjecture 
Assumptions~\ref{ass:adj_faith} and~\ref{ass:ori_faith} to be weaker than the faithfulness assumption: violations of faithfulness as well as of Assumptions~\ref{ass:adj_faith} and~\ref{ass:ori_faith} correspond to points that are close to conditional independence hypersurfaces~\cite{URB13}. The number of these hypersurfaces (and hence the number of violations) increases in $s$ for the faithfulness assumption and in $S_{\Theta}$ for Assumptions~\ref{ass:adj_faith} and~\ref{ass:ori_faith}. Hence if the two DAGs $\G^{(1)}$ and $\G^{(2)}$ are large and complex while having a sparse difference, then $S_{\Theta}<\!\!<s$. See the Supplementary Material for more details. 
\qed
\end{remark}

\section{Evaluation}\label{sec:eval}

In this section, we compare our DCI algorithm with PC and GES on both synthetic and real data. The code utilized for the following experiments can be found at~\url{https://github.com/csquires/dci}.

\subsection{Synthetic data}



We analyze the performance of our algorithm in both, the low- and high-dimensional setting. For both settings we generated 100 realizations of pairs of upper-triangular SEMs $(\B1, \epsilon^{(1)})$ and $(\B2, \epsilon^{(2)})$. For $\B1$, the graphical structure was generated using an Erd\"os-Renyi model with expected neighbourhood size $s$, on $p$ nodes and $n$ samples. The edge weights were uniformly drawn from $[-1, -0.25] \cup [0.25, 1]$ to ensure that they were bounded away from zero. $\B2$ was then generated from $\B1$ by adding and removing edges with probability $0.1$, i.e.,
\begin{align*}
B^{(2)}_{ij} \overset{\textrm{i.i.d.}}\sim\textrm{Ber}(0.9) \cdot B^{(1)}_{ij} \textrm{ if } B^{(1)}_{ij} \!\neq\! 0,
&&
B^{(2)}_{ij} \overset{\textrm{i.i.d.}}\sim\textrm{Ber}(0.1) \cdot \textrm{Unif}([-1, -.25] \cup [.25, 1])\textrm{ if } B^{(1)}_{ij} \!=\! 0
\end{align*}
Note that while the DCI algorithm is able to identify changes in edge weights, we only generated DAG models that differ by edge insertions and deletions. This is to provide a fair comparison to the naive approach, where we separately estimate the two DAGs $\G^{(1)}$ and $\G^{(2)}$ and then take their difference, since this approach can only identify insertions and deletions of edges.

\begin{figure}[t!]
	\centering
	\subfigure[skeleton]{\includegraphics[width=.4\textwidth]{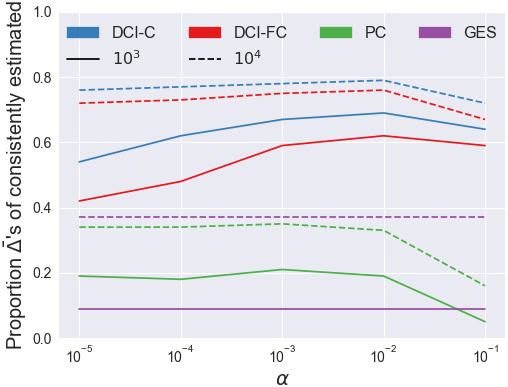}}
	\subfigure[skeleton \& orientation]{\includegraphics[width=.4\textwidth]{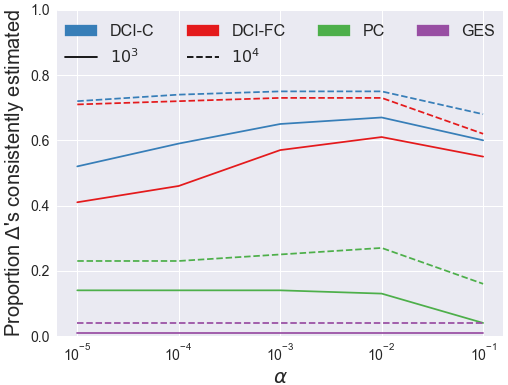}}
	\subfigure[changed variances]{\includegraphics[width=.4\textwidth]{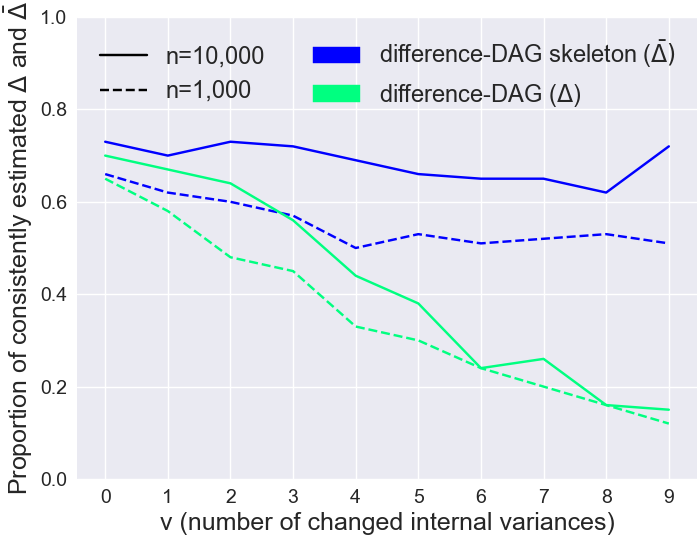}}
	\caption{Proportion of consistently estimated D-DAGs for 100 realizations per setting with $p = 10$ nodes and sample size $n$. Figures (a) and (b) show the proportion of consistently estimated D-DAGs when considering just the skeleton ($\bar{\Delta}$) and both skeleton and edge orientations ($\Delta$), respectively;  $\alpha$ is the significance level used for the hypothesis tests in the algorithms. Figure $(c)$ shows the proportion of consistent estimates with respect to the number of changes in internal node variances $v$.}
	\label{fig:low-dim}
	\vspace{-0.3cm}
\end{figure}

In Figure~\ref{fig:low-dim} we analyzed how the performance of the DCI algorithm changes over different choices of significance levels $\alpha$. 
The simulations were performed on graphs with $p=10$ nodes, neighborhood size of $s=3$ and sample size $n\in\{10^3, 10^4\}$.
For Figure~\ref{fig:low-dim} (a) and (b) we set $\epsilon^{(1)}, \epsilon^{(2)} \sim \mathcal{N}(0, {\bf 1}_p)$, which by Corollary~\ref{cor:dir} ensures that the D-DAG $\Delta$ is fully identifiable. We compared the performance of DCI to the naive approach, where we separately estimated the two DAGs $\G^{(1)}$ and $\G^{(2)}$ and then took their difference. For separate estimation we used the prominent  PC and GES algorithms tailored to the Gaussian setting. Since KLIEP requires an additional tuning parameter, to understand how $\alpha$ influences the performance of the DCI algorithm, we here only analyzed initializations in the fully connected graph (DCI-FC) and using the constraint-based method described in the Supplementary Material (DCI-C). Both initializations provide a provably consistent algorithm. 
Figure~\ref{fig:low-dim} (a) and (b) show the proportion of consistently estimated D-DAGs by just considering the skeleton ($\bar{\Delta}$) and both skeleton and orientations ($\Delta$), respectively. For PC and GES, we considered the set of edges that appeared in one estimated skeleton but disappeared in the other as the estimated skeleton of the D-DAG $\bar{\Delta}$. In determining orientations, we considered the arrows that were directed in one estimated CP-DAG but disappeared in the other as the estimated set of directed arrows. Since the main purpose of this low-dimensional simulation study is to validate our theoretical findings, we used the exact recovery rate as evaluation criterion. In line with our theoretical findings, both variants of the DCI algorithm outperformed taking differences after separate estimation. 
Figure~\ref{fig:low-dim} (a) and (b)  also show that the PC algorithm outperformed GES, which is unexpected given previous results showing that GES usually has a higher exact recovery rate than the PC algorithm for estimating a single DAG. This is due to the fact that while the PC algorithm usually estimates less DAGs correctly, the incorrectly estimated DAGs tend to look more similar to the true model than the incorrect estimates of GES (as also reported in~\cite{SWM17}) and can still lead to a correct estimate~of~the~D-DAG.

In Figure~\ref{fig:low-dim} (c) we analyzed the effect of changes in the noise variances on estimation performance. We set $\epsilon^{(1)} \sim \mathcal{N}(0, {\bf 1}_p)$, while for $\epsilon^{(2)}$ we randomly picked $v$ nodes and uniformly sampled their variances from $[1.25, 2]$. We used $\alpha=.05$ as significance level based on the evaluation from Figure~\ref{fig:low-dim}. 
In line with Theorem~\ref{thm:dir}, as we increase the number of nodes $i$ such that $\epsilon_i^{(1)} \neq \epsilon_i^{(2)}$,  the number of edges whose orientations can be determined decreases. This is because Algorithm~\ref{alg:dir} can only determine an edge's orientation when the variance of at least one of its nodes is invariant. Moreover, Figure~\ref{fig:low-dim}~(c) shows that the accuracy of Algorithm~\ref{alg:skel} is not impacted by changes in the noise variances.

Finally, Figure~\ref{fig:high-dim} (a) - (b) show the performance (using ROC curves) of the DCI algorithm in the high-dimensional setting when initiated using KLIEP (DCI-K) and DCI-C. The simulations were performed on graphs with $p=100$ nodes, expected neighborhood size of $s=10$, sample size $n=300$, and  $\epsilon^{(1)}, \epsilon^{(2)} \sim \mathcal{N}(0, {\bf 1}_p)$.
$\B2$ was derived from $\B1$ so that the total number of changes was $5\%$ of the total number of edges in $\B1$, with an equal amount of insertions and deletions. 
Figure~\ref{fig:high-dim} (a) - (b) show that both DCI-C and DCI-K 
perform similarly well and outperform separate estimation using GES and the PC algorithm. 
The respective plots for 10\% change between $\B1$ and $\B2$ are given in the Supplementary Material.

\begin{figure}[t]
	\centering
	\subfigure[D-DAG skeleton $\bar{\Delta}$]
	{\includegraphics[width=.25\textwidth]{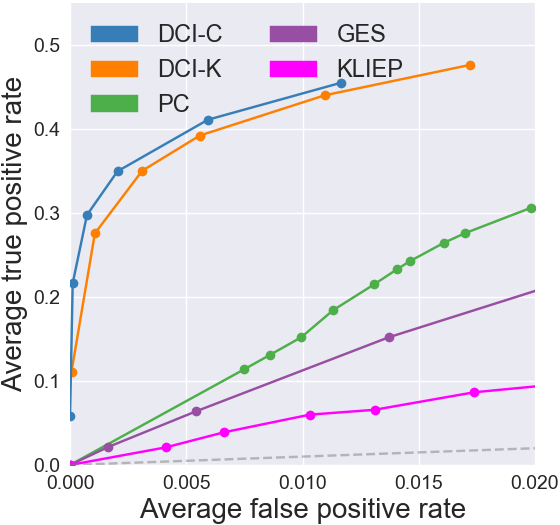}}
	\subfigure[D-DAG $\Delta$]
	{\includegraphics[width=.25\textwidth]{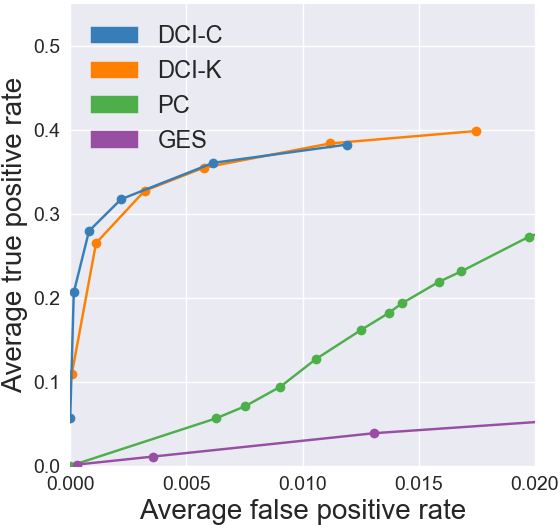}}
	\subfigure[T-cell activation]
	{\includegraphics[width=.47\textwidth]{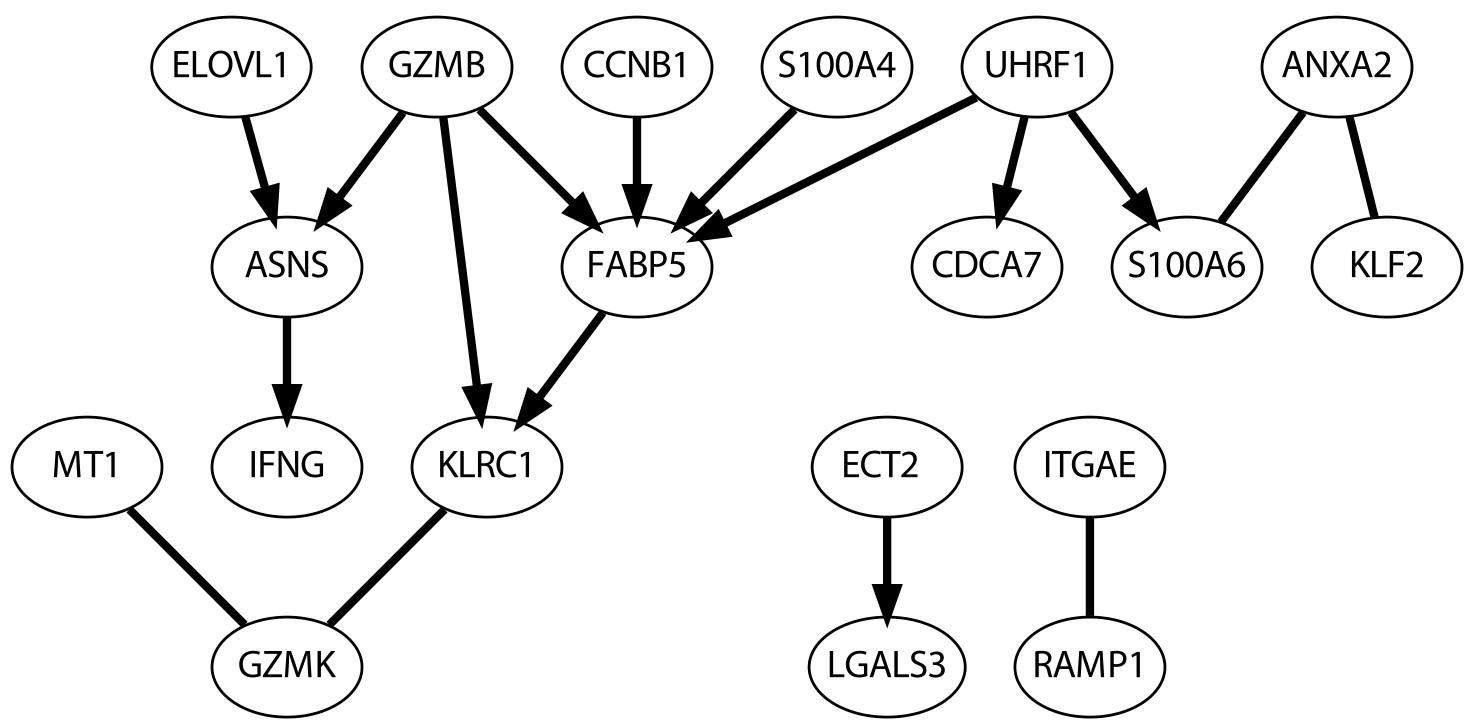}}
	\caption{High-dimensional evaluation of the DCI algorithm in both simulation and real data; $(a) - (b)$ are the ROC curves for estimating the D-DAG $\Delta$ and its skeleton $\bar{\Delta}$ with $p = 100$ nodes, expected neighbourhood size $s = 10$, $n = 300$ samples, and 5\% change~between~DAGs; $(c)$ shows the estimated D-DAG between gene expression data from naive and activated T cells.}
	\label{fig:high-dim}
	\vspace{-0.2cm}
\end{figure}


%

\subsection{Real data analysis}

\vspace{-0.2cm}


{\bf Ovarian cancer.}
We tested our method on an ovarian cancer data set~\cite{TTG08} that contains two groups of patients with different survival rates and was previously analyzed using the DPM algorithm in the undirected setting~\cite{ZCL14}. We followed the analysis of~\cite{ZCL14} and applied the DCI algorithm to gene expression data from the apoptosis and TGF-$\beta$ pathways. In the apoptosis pathway we identified two hub nodes: 
BIRC3, also discovered by DPM, is an inhibitor of apoptosis~\cite{Johnstone2008} and one of the main disregulated genes in ovarian cancer~\cite{Jonsson2014}; PRKAR2B, not identified by DPM, has been shown to be important in disease progression in ovarian cancer cells~\cite{CNW08} and an important regulatory unit for cancer cell growth~\cite{CBG08}. In addition, the RII-$\beta$ protein encoded by PRKAR2B has been considered as a therapeutic target for cancer therapy~\cite{CCY99,MGM06}, thereby confirming the relevance of our findings. With respect to the TGF-$\beta$ pathway, the DCI method identified THBS2 and COMP as hub nodes. Both of these genes have been implicated in resistance to chemotherapy in epithelial ovarian cancer~\cite{MFC13} and were also recovered by DPM. Overall, the D-UG discovered by DPM is comparable to the D-DAG found by our method. More details on this analysis are given in the Supplementary Material.

{\bf T cell activation.} 
To demonstrate the relevance of our method for current genomics applications, we applied DCI to single-cell gene expression data of naive and activated T cells in order to study the pathways involved 
during the immune response to a pathogen. We analyzed data from 377 activated and 298 naive T cells obtained by~\cite{Singer2016} using the recent drop-seq technology. From the previously identified differentially expressed genes between naive and activated T cells \cite{Sarkar2008}, we selected all genes that had a fold expression change above 10, resulting in 60 genes for further analysis.

We initiated DCI using KLIEP, thresholding the edge weights at 0.005, 
and ran DCI for different tuning parameters and with cross-validation to obtain the final DCI output shown in Figure~\ref{fig:high-dim} (c) using stability selection as described in~\cite{Meinshausen2010}. The genes with highest out-degree, and hence of interest for future interventional experiments, are GZMB and UHRF1. 
Interestingly, GZMB 
is known to induce cytotoxicity, important  for attacking and killing the invading pathogens. Furthermore, this gene has 
been reported as the most differentially expressed gene during T cell activation~\cite{Hatton2013,Peixoto2007}.
UHRF1 has been shown to be critical for T cell maturation and proliferation through knockout experiments~\cite{Cui2016,Obata2014}. Interestingly, the UHRF1 protein is a transcription factor, i.e.~it binds to DNA sequences and regulates the expression of other genes, thereby confirming its role as an important causal regulator. 
Learning a D-DAG as opposed to a D-UG is crucial for prioritizing interventional experiments. In addition, the difference UG for this application would not only have been more dense, but it would also have resulted in additional hub nodes such as FABP5, KLRC1, and ASNS, which based on the current biological literature seem secondary to T cell activation (FABP5 is involved in lipid binding, KLRC1 has a role in natural killer cells but not in T cells, and ASNS is an asparagine synthetase gene). 
The difference DAGs learned by separately applying the GES and PC algorithms on naive and activated T cell data sets as well as on the ovarian cancer data sets are included in the Supplementary Material for comparison.

\section{Discussion}\label{sec:discuss}

We presented an algorithm for directly estimating the difference between two causal DAG models given i.i.d.~samples from each model. To our knowledge this is the first such algorithm and is of particular interest for learning differences between related networks, where each network might be large and complex, while the difference is sparse. We provided consistency guarantees for our algorithm and showed on synthetic and real data that it outperforms the naive approach of separately estimating two DAG models and taking their difference. While our proofs were for the setting with no latent variables, they extend to the setting where the edge weights and noise terms of all latent variables remain invariant across the two DAGs. We applied our algorithm to gene expression data in bulk and from single cells, showing that DCI is able to identify biologically relevant genes for ovarian cancer and T-cell activation. This purports DCI as a promising method for identifying intervention targets that are causal for a particular phenotype for subsequent experimental validation. A more careful analysis with respect to the D-DAGs discovered by our DCI algorithm is needed to reveal its impact for scientific discovery.

In order to make DCI scale to networks with thousands of nodes, an important challenge is to reduce the number of hypothesis tests. As mentioned in Remark~\ref{rk:complexity}, currently the time complexity (given by the number of hypothesis tests) of DCI scales exponentially with respect to the size of $S_\Theta$. The PC algorithm overcomes this problem by dynamically updating the list of CI tests given the current estimate of the graph. It is an open problem whether one can similarly reduce the number of hypothesis tests for DCI. Another challenge is to relax Assumptions~4.1 and~4.2. Furthermore, in many applications (e.g., when comparing normal to disease states), there is an imbalance of data/prior knowledge for the two models and it is of interest to develop methods that can make use of this for learning the differences between the two models.


Finally, as described in Section~\ref{sec:related}, DCI is preferable to separate estimation methods like PC and GES since it can infer not only edges that appear or disappear, but also edges with changed edge weights. However, unlike separate estimation methods, DCI relies on the assumption that the two DAGs share a topological order. 
Developing methods to directly estimate the difference of two DAGs that do not share a topological order is of great interest for future work.

\section*{Acknowledgements}
Yuhao Wang was supported by ONR (N00014-17-1-2147), NSF (DMS-1651995) and the MIT-IBM Watson AI Lab. Anastasiya Belyaeva was supported by an NSF Graduate Research Fellowship (1122374) and the Abdul Latif Jameel World Water and Food Security Lab (J-WAFS) at MIT. Caroline Uhler was partially supported by ONR (N00014-17-1-2147), NSF (DMS-1651995), and a Sloan Fellowship.

\bibliography{nips_2018}

\begin{thebibliography}{10}

\bibitem{AMP97}
S.~A. Andersson, D.~Madigan, and M.~D. Perlman.
\newblock A characterization of markov equivalence classes for acyclic
  digraphs.
\newblock {\em The Annals of Statistics}, 25(2):505--541, 1997.

\bibitem{BGL11}
A.-L. Barab\'asi, N.~Gulbahce, and J.~Loscalzo.
\newblock Network medicine: a network-based approach to human disease.
\newblock {\em Nature Reviews Genetics}, 12(1):56--68, 2011.

\bibitem{BZ04}
A.-L. Barab\'asi and Z.~N. Oltvai.
\newblock Network biology: understanding the cell's functional organization.
\newblock {\em Nature Reviews Genetics}, 5(2):101--113, 2004.

\bibitem{CNW08}
C.~Cheadle, M.~Nesterova, T.~Watkins, K.~C. Barnes, J.~C. Hall, A.~Rosen, K.~G.
  Becker, and Y.~S. Cho-Chung.
\newblock Regulatory subunits of {PKA} define an axis of cellular
  proliferation/differentiation in ovarian cancer cells.
\newblock {\em BMC Medical Genomics}, 1(1):43, 2008.

\bibitem{CBG08}
F.~Chiaradonna, C.~Balestrieri, D.~Gaglio, and M.~Vanoni.
\newblock {RAS} and {PKA} pathways in cancer: new insight from transcriptional
  analysis.
\newblock {\em Frontiers in Bioscience}, 13:5257--5278, 2008.

\bibitem{CCY99}
Y.~S. Cho-Chung.
\newblock Antisense oligonucleotide inhibition of serine/threonine kinases: an
  innovative approach to cancer treatment.
\newblock {\em Pharmacology \& Therapeutics}, 82(2):437--449, 1999.

\bibitem{Cui2016}
Y.~Cui, X.~Chen, J.~Zhang, X.~Sun, H.~Liu, L.~Bai, C.~Xu, and X.~Liu.
\newblock {Uhrf1 Controls iNKT Cell Survival and Differentiation through the
  Akt-mTOR Axis}.
\newblock {\em Cell Reports}, 15(2):256--263, 2016.

\bibitem{DST13}
J.~Draisma, S.~Sullivant, and K.~Talaska.
\newblock Positivity for gaussian graphical models.
\newblock {\em Advances in Applied Mathematics}, 50(5):661--674, 2013.

\bibitem{DP07}
M.~Drton and M.~D. Perlman.
\newblock Multiple testing and error control in gaussian graphical model
  selection.
\newblock {\em Statistical Science}, pages 430--449, 2007.

\bibitem{FLN00}
N.~Friedman, M.~Linial, I.~Nachman, and D.~Pe'er.
\newblock Using bayesian networks to analyze expression data.
\newblock {\em Journal of Computational Biology}, 7(3-4):601--620, 2000.

\bibitem{GSKZ17}
A.~Ghassami, S.~Salehkaleybar, N.~Kiyavash, and K.~Zhang.
\newblock Learning causal structures using regression invariance.
\newblock In {\em Advances in Neural Information Processing Systems}, pages
  3015--3025, 2017.

\bibitem{Hatton2013}
L.~A. Hatton.
\newblock {\em Molecular Mechanisms Regulating {CD8+ T} Cell Granzyme and
  Perforin Gene Expression}.
\newblock PhD thesis, University of Melbourne, 2013.

\bibitem{HRD09}
N.~J. Hudson, A.~Reverter, and B.~P. Dalrymple.
\newblock A differential wiring analysis of expression data correctly
  identifies the gene containing the causal mutation.
\newblock {\em PLoS Computational Biology}, 5(5):e1000382, 2009.

\bibitem{Johnstone2008}
R.W. Johnstone, A.J. Frew, and M.J. Smyth.
\newblock {The TRAIL apoptotic pathway in cancer onset, progression and
  therapy}.
\newblock {\em Nature Reviews Cancer}, 8(10):782--798, 2008.

\bibitem{Jonsson2014}
J.~J{\"{o}}nsson, K.~Bartuma, M.~Dominguez-Valentin, K.~Harbst, Z.~Ketabi,
  S.~Malander, M.~J{\"{o}}nsson, A.~Carneiro, A.~M{\aa}sb{\"{a}}ck,
  G.~J{\"{o}}nsson, and M.~Nilbert.
\newblock {Distinct gene expression profiles in ovarian cancer linked to Lynch
  syndrome.}
\newblock {\em Familial Cancer}, 13:537--545, 2014.

\bibitem{KB07}
M.~Kalisch and P.~B{\"u}hlmann.
\newblock Estimating high-dimensional directed acyclic graphs with the
  pc-algorithm.
\newblock {\em Journal of Machine Learning Research}, 8(Mar):613--636, 2007.

\bibitem{KGS11}
M.~Kanehisa, S.~Goto, Y.~Sato, M.~Furumichi, and T.~Mao.
\newblock {KEGG} for integration and interpretation of large-scale molecular
  data sets.
\newblock {\em Nucleic Acids Research}, 40(D1):D109--D114, 2011.

\bibitem{LAU96}
S.~L Lauritzen.
\newblock {\em Graphical Models}, volume~17.
\newblock Clarendon Press, 1996.

\bibitem{LUS14}
S.~Lin, C.~Uhler, B.~Sturmfels, and P.~B{\"u}hlmann.
\newblock Hypersurfaces and their singularities in partial correlation testing.
\newblock {\em Foundations of Computational Mathematics}, 14(5):1079--1116,
  2014.

\bibitem{LFS17}
S.~Liu, K.~Fukumizu, and T.~Suzuki.
\newblock Learning sparse structural changes in high-dimensional markov
  networks.
\newblock {\em Behaviormetrika}, 44(1):265--286, 2017.

\bibitem{LQG14}
S.~Liu, J.~A. Quinn, M.~U. Gutmann, T.~Suzuki, and M.~Sugiyama.
\newblock Direct learning of sparse changes in markov networks by density ratio
  estimation.
\newblock {\em Neural Computation}, 26(6):1169--1197, 2014.

\bibitem{LUT05}
Helmut L{\"u}tkepohl.
\newblock {\em New introduction to multiple time series analysis}.
\newblock Springer Science \& Business Media, 2005.

\bibitem{MFC13}
S.~Marchini, R.~Fruscio, L.~Clivio, L.~Beltrame, L.~Porcu, I.~F. Nerini,
  D.~Cavalieri, G.~Chiorino, G.~Cattoretti, C.~Mangioni, R.~Milani, V.~Torri,
  C.~Romualdi, A.~Zambelli, M.~Romano, M.~Signorelli, S.~di~Giandomenico, and
  M~D'Incalci.
\newblock Resistance to platinum-based chemotherapy is associated with
  epithelial to mesenchymal transition in epithelial ovarian cancer.
\newblock {\em European Journal of Cancer}, 49(2):520--530, 2013.

\bibitem{Meek1997}
C.~Meek.
\newblock {\em Graphical Models: Selecting Causal and Statistical Models}.
\newblock PhD thesis, Carnegie Mellon University, 1997.

\bibitem{Meinshausen2010}
N.~Meinshausen and P.~B{\"{u}}hlmann.
\newblock {Stability selection}.
\newblock {\em Journal of the Royal Statistical Society. Series B: Statistical
  Methodology}, 72(4):417--473, 2010.

\bibitem{MGM06}
T.~Mikalsen, N.~Gerits, and U.~Moens.
\newblock Inhibitors of signal transduction protein kinases as targets for
  cancer therapy.
\newblock {\em Biotechnology Annual Review}, 12:153--223, 2006.

\bibitem{NHM15}
P.~Nandy, A.~Hauser, and M.~H. Maathuis.
\newblock High-dimensional consistency in score-based and hybrid structure
  learning, 2015.
\newblock To appear in \emph{Annals of Statistics}.

\bibitem{Obata2014}
Y.~Obata, Y.~Furusawa, T.A. Endo, J.~Sharif, D.~Takahashi, K.~Atarashi,
  M.~Nakayama, S.~Onawa, Y.~Fujimura, M.~Takahashi, T.~Ikawa, T.~Otsubo, Y.I.
  Kawamura, T.~Dohi, S.~Tajima, H.~Masumoto, O.~Ohara, K.~Honda, S.~Hori,
  H.~Ohno, H.~Koseki, and K.~Hase.
\newblock {The epigenetic regulator Uhrf1 facilitates the proliferation and
  maturation of colonic regulatory T cells}.
\newblock {\em Nature Immunology}, 15(6):571--579, 2014.

\bibitem{OGS99}
H.~Ogata, S.~Goto, K.~Sato, W.~Fujibuchi, H.~Bono, and M.~Kanehisa.
\newblock {KEGG}: Kyoto encyclopedia of genes and genomes.
\newblock {\em Nucleic Acids Research}, 27(1):29--34, 1999.

\bibitem{Pearl:00}
J.~Pearl.
\newblock {\em Causality: Models, Reasoning, and Inference}.
\newblock Cambridge University Press, 2000.

\bibitem{Peixoto2007}
A.~Peixoto, C.~Evaristo, I.~Munitic, M.~Monteiro, A.~Charbit, B.~Rocha, and
  H.~Veiga-Fernandes.
\newblock {CD8 single-cell gene coexpression reveals three different effector
  types present at distinct phases of the immune response}.
\newblock {\em The Journal of Experimental Medicine}, 204(5):1193--1205, 2007.

\bibitem{PBM16}
J.~Peters, P.~B{\"u}hlmann, and N.~Meinshausen.
\newblock Causal inference by using invariant prediction: identification and
  confidence intervals.
\newblock {\em Journal of the Royal Statistical Society: Series B (Statistical
  Methodology)}, 78(5):947--1012, 2016.

\bibitem{POK07}
J.~E. Pimanda, K.~Ottersbach, K.~Knezevic, S.~Kinston, W.~Y. Chan, N.~K.
  Wilson, J.~Landry, A.~D Wood, A.~Kolb-Kokocinski, A.~R. Green, D.~Tannahill,
  G.~Lacaud, V.~Kouskoff, and B.~G\"{o}ttgens.
\newblock Gata2, {Fli1}, and {Scl} form a recursively wired gene-regulatory
  circuit during early hematopoietic development.
\newblock {\em Proceedings of the National Academy of Sciences},
  104(45):17692--17697, 2007.

\bibitem{POU11}
M.~Pourahmadi.
\newblock Covariance estimation: The glm and regularization perspectives.
\newblock {\em Statistical Science}, pages 369--387, 2011.

\bibitem{RSZ06}
J.~Ramsey, P.~Spirtes, and J.~Zhang.
\newblock Adjacency-faithfulness and conservative causal inference.
\newblock In {\em Proceedings of the Twenty-Second Conference on Uncertainty in
  Artificial Intelligence}, pages 401--408. AUAI Press, 2006.

\bibitem{RHB00}
J.~M. Robins and B.~Hernan, Miguel.~A.and~Brumback.
\newblock Marginal structural models and causal inference in epidemiology,
  2000.

\bibitem{SC13}
S.~Sanei and J.~A. Chambers.
\newblock {\em EEG Signal Processing}.
\newblock John Wiley \& Sons, 2013.

\bibitem{Sarkar2008}
S.~Sarkar, V.~Kalia, W.N. Haining, B.T. Konieczny, S.~Subramaniam, and
  R.~Ahmed.
\newblock {Functional and genomic profiling of effector CD8 T cell subsets with
  distinct memory fates}.
\newblock {\em The Journal of Experimental Medicine}, 205(3):625--640, 2008.

\bibitem{SHHK06}
S.~Shimizu, P.~O. Hoyer, A.~Hyv{\"a}rinen, and A.~Kerminen.
\newblock A linear non-gaussian acyclic model for causal discovery.
\newblock {\em Journal of Machine Learning Research}, 7(Oct):2003--2030, 2006.

\bibitem{Singer2016}
M.~Singer, C.~Wang, L.~Cong, N.D. Marjanovic, M.S. Kowalczyk, H.~Zhang,
  J.~Nyman, K.~Sakuishi, S.~Kurtulus, D.~Gennert, J.~Xia, J.Y.H. Kwon,
  J.~Nevin, R.H. Herbst, I.~Yanai, O.~Rozenblatt-Rosen, V.K. Kuchroo, A.~Regev,
  and A.C. Anderson.
\newblock {A Distinct Gene Module for Dysfunction Uncoupled from Activation in
  Tumor-Infiltrating T Cells}.
\newblock {\em Cell}, 166(6):1500--1511, 2016.

\bibitem{SWM17}
L.~Solus, Y.~Wang, L.~Matejovicova, and C.~Uhler.
\newblock Consistency guarantees for permutation-based causal inference
  algorithms, 2017.

\bibitem{SGS00}
P.~Spirtes, C.~N. Glymour, and R.~Scheines.
\newblock {\em Causation, Prediction, and Search}.
\newblock MIT press, 2000.

\bibitem{TTG08}
R.~W. Tothill, A.~V. Tinker, J.~George, R.~Brown, S.~B. Fox, S.~Lade, D.~S.
  Johnson, M.~K. Trivett, D.~Etemadmoghadam, B.~Locandro, N.~Traficante,
  S.~Fereday, J.~A. Hung, Y.~Chiew, I.~Haviv, Australian Ovarian Cancer~Study
  Group, D.~Gertig, A.~deFazio, and D.~D.L. Bowtell.
\newblock Novel molecular subtypes of serous and endometrioid ovarian cancer
  linked to clinical outcome.
\newblock {\em Clinical Cancer Research}, 14(16):5198--5208, 2008.

\bibitem{TBA06}
I.~Tsamardinos, L.~E. Brown, and C.~F. Aliferis.
\newblock The max-min hill-climbing bayesian network structure learning
  algorithm.
\newblock {\em Machine Learning}, 65(1):31--78, 2006.

\bibitem{URB13}
C.~Uhler, G.~Raskutti, P.~B{\"u}hlmann, and B.~Yu.
\newblock Geometry of the faithfulness assumption in causal inference.
\newblock {\em The Annals of Statistics}, pages 436--463, 2013.

\bibitem{GB13}
S.~Van~de Geer and P.~B{\"u}hlmann.
\newblock $\ell_0$-penalized maximum likelihood for sparse directed acyclic
  graphs.
\newblock {\em The Annals of Statistics}, 41(2):536--567, 2013.

\bibitem{VP90}
T.~Verma and J.~Pearl.
\newblock Equivalence and synthesis of causal models.
\newblock In {\em Proceedings of the Sixth Annual Conference on Uncertainty in
  Artificial Intelligence}, pages 255--270. Elsevier Science Inc., 1990.

\bibitem{ZHZ17}
K.~Zhang, B.~Huang, J.~Zhang, C.~Glymour, and B.~Sch{\"o}lkopf.
\newblock Causal discovery from nonstationary/heterogeneous data: Skeleton
  estimation and orientation determination.
\newblock In {\em IJCAI: Proceedings of the Conference}, volume 2017, page
  1347. NIH Public Access, 2017.

\bibitem{ZCL14}
S.~D. Zhao, T.~T. Cai, and H.~Li.
\newblock Direct estimation of differential networks.
\newblock {\em Biometrika}, 101(2):253--268, 2014.

\end{thebibliography}
\bibliographystyle{plain}

\newpage

\appendix

\newcommand{\snum}{S}
\renewcommand{\theequation}{\snum.\arabic{equation}}
\counterwithin{algorithm}{section}
\counterwithin{figure}{section}

\section{Hypothesis testing framework}
 In this section, we provide the details regarding the hypothesis tests that we used for testing the following two null hypotheses:
\begin{align*}
H_{0}^{i,j\mid S}: \beta_{i,j \mid S}^{(1)} = \beta_{i,j \mid S}^{(2)} \qquad \textrm{and} \qquad 
H_{0}^{j\mid S}: \sigma_{j \mid S}^{(1)} = \sigma_{j \mid S}^{(2)}.
\end{align*}
As in~\cite{GSKZ17}, we used hypothesis tests based on the F-test for testing for invariance between regression coefficients and residual variances. For testing $H_{0}^{i,j\mid S}: \beta_{i,j \mid S}^{(1)} = \beta_{i,j \mid S}^{(2)}$ we used the test statistic
\begin{align*}
\hat{T} \!\!:= (\hat{\beta}_{i,j \mid S}^{(1)} - \hat{\beta}_{i,j \mid S}^{(2)})^2 \cdot \quad \big[\big((\hat{\sigma}_{j \mid M}^{(1)}\!)^2(n_1\!\hat{\Sigma}_{M,M}^{(1)}\!)^{-1} \!\!\!+\! (\hat{\sigma}_{j \mid M}^{(2)}\!)^2(n_2\hat{\Sigma}_{M,M}^{(2)}\!)^{-1}\big)^{-1}\big]_{i_M i_M}
\end{align*}
where $\hat{\beta}_{i,j \mid S}^{(k)}$ is the empirical estimate of $\beta_{i,j \mid S}^{(k)}$ obtained by ordinary least squares, $(\hat{\sigma}_{j \mid M}^{(k)})^2$ is an unbiased estimator of the regression residual variance $(\sigma_{j \mid M}^{(k)})^2$,  $\hat{\Sigma}_{M,M}^{(k)}$ is the sample covariance matrix of the random vector $X_M^{(k)}$ with $M = \{i\} \cup S$, and $i_M$ denotes the index in $M$ corresponding to the element $i$. In~\cite{LUT05}[Section~3.6], the author shows that under the null hypothesis the asymptotic distribution of $\hat{T}$ can be approximated by the F-distribution $F(1, n_1 + n_2 - 2 | S | - 2)$. The basic explanation is that, for $M :=S \cup \{i\}$, let $\beta_{M}^{(k)}$ be the best linear predictor when regressing $\Xk_j$ onto $\Xk_{M}$, i.e., our estimator is $\Xk_j = (\beta_{M}^{(k)})^T \Xk_{M} + \tilde{\epsilon}_j^{(k)}$. Let $\beta$ be the vector 
$$\beta := \left[ \begin{matrix} \beta_{M}^{(1)} \\ \beta_{M}^{(2)} \end{matrix}\right],$$
and let $C \in \R^{2|M|}$ have $C_{i_{M}} = 1$ and $C_{|M| + i_{M}} = -1$ and all other entries as zero. Then the null hypothesis $H_{0}^{i,j|S}$ can be written as: $C^T \beta = 0$. It follows from Proposition~3.5 of~\cite{LUT05}, on the asymptotic distribution of the Wald statistic, that $\hat{T}$ converges in distribution to $\chi^2(1)$, i.e., a $\chi^2$-distribution with $1$ degree of freedom. 

However, the F-distribution $F(1, n_1 + n_2 - 2 | S | - 2)$ is a better approximation for the distribution of $\hat{T}$, as outlined in Section~3.6 of~\cite{LUT05}. A brief justification is in order. First, we know that the convergence is the same: for an F-distribution $F(1, d)$, as $d \to \infty$, we have $F(1, d) \overset{d}{\to} \chi^2(1)$. Additionally, $F(1, d)$ and $\hat{T}$ both have a fatter tail than $\chi^2(1)$. Together, these facts suggest the choice of a F-distribution $F(1, d)$ with $d \to \infty$ as $n_1 , n_2 \to \infty$. For the second parameter $d$, we used $d = n_1 + n_2 - 2 | S | - 2$, the total degrees of freedom of the unbiased estimators of the two regression residual variances, i.e., $(\hat{\sigma}_{j \mid M}^{(1)})^2$ and $(\hat{\sigma}_{j \mid M}^{(2)})^2$.

Similarly, for testing $H_{0}^{j\mid S}$, we used the test statistic $$\hat{F} := (\hat{\sigma}_{j \mid S}^{(1)})^2 / (\hat{\sigma}_{j \mid S}^{(2)})^2.$$ Under the null hypothesis, $\hat{F}$ is a ratio of two $\chi^2$-distributed random variables and hence $\hat{F}$ follows an F-distribution, namely $F(n_1 - | S | - 1, n_2 - | S | - 1)$. 

\section{Comparison to related work on invariant causal structure learning}

The complimentary problem to learning the difference of two DAG models is the problem of inferring the causal structure that is \emph{invariant} across different environments. Algorithms for this problem have been developed in recent literature~\cite{GSKZ17,PBM16,ZHZ17}. Since the hypothesis testing framework in~\cite{GSKZ17} is similar to our approach, we here provide an example to explain the differences between the two approaches and in particular to show that a new approach is needed in order to obtain a consistent method for learning the difference DAG.  


Recall that when we have access to data from a pair of DAGs, the algorithms in~\cite{GSKZ17} make use of the following two sets that are estimated from the data. The first is the {\it regression invariance set}: 
\begin{align*}
R := \Big\{(j, S) : \beta_S^{(1)}(j) = \beta_S^{(2)}(j)\Big\},
\end{align*}
where $\beta_S^{(k)}(j)$ corresponds to the best linear predictor when regressing $X_j^{(k)}$ onto $X_S^{(k)}$. The second is $I$, the set of variables whose internal noise variances have been changed across the two DAGs:
\begin{align*}
I := \Big\{j : \forall S \subseteq [p] \setminus \{j\}, \mathbb{E}(X_j^{(1)} - (\beta_S^{(1)}(j))^T X_S^{(1)})^2 \neq \mathbb{E}(X_j^{(2)} - (\beta_S^{(2)}(j))^T X_S^{(2)})^2\Big\}.
\end{align*}
The output of the algorithms in~\cite{GSKZ17} is fully determined by the invariant elements given in $R$ and $I$. In particular, Algorithm~1 in~\cite{GSKZ17} estimates the invariant causal structure by considering all elements in $R$ and $I$, while Algorithm~2 in~\cite{GSKZ17} is a more efficient constraint-based algorithm that considers only a subset of the elements in $R$. 

\begin{figure}[!b]
	\centering
	\subfigure[]{\includegraphics[width=.25\textwidth]{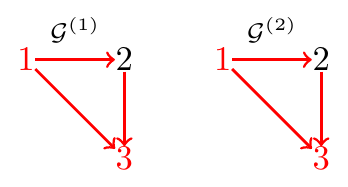}} \hspace{1.5cm}
	\subfigure[]{\includegraphics[width=.25\textwidth]{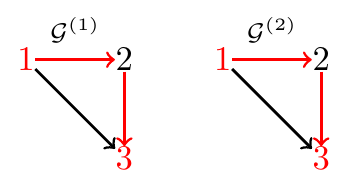}} \\
	\subfigure[]{\includegraphics[width=.35\textwidth]{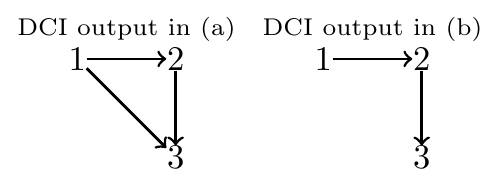}}
	\caption{(a) - (b): Example of two DAG pairs where the corresponding D-DAGs are different but the application of algorithms~1 and~2 from~\cite{GSKZ17} would result in the same sets $R$ and $I$. 
		The red edges correspond to the edges that have different edge weights across the two DAGs, the black edges correspond to the edges that have the same edge weights across the two DAGs. The red nodes correspond to the nodes that have different internal noise variances across the two DAGs and the black nodes correspond to the nodes that have unchanged internal noise variances. (c): D-DAGs output by the DCI algorithm when data is generated from (a) and (b), respectively.}
	\label{fig:ghassami}
	\vspace{-0.4cm}
\end{figure}

\begin{example}
	Figure~\ref{fig:ghassami} shows two cases where the underlying D-DAGs are different but in both cases~\cite{GSKZ17} would produce the same sets $R$ and $I$ that are used to assign edge orientations. In (a) we consider two fully connected linear SEMs $(\B1, \epsilon^{(1)})$ and $(\B2, \epsilon^{(2)})$ where the edge weights of all edges change across the two DAGs. The variances of the internal noise terms for nodes $1$ and $3$ change while the variance of the internal noise term of node $2$ stays the same. In (b) we instead consider  two fully connected linear SEMs $(\B1, \epsilon^{(1)})$ and $(\B2, \epsilon^{(2)})$ where $\B1_{12} \neq \B2_{12}$ and $\B1_{23} \neq \B2_{23}$. Moreover, the variances of the internal noise terms of nodes $1$ and $3$ change across $k = \{1,2\}$ while the variance of node $2$ stays the same. It can be easily shown that in both cases $R = \emptyset$ and $I = \{1,3\}$. Since both (a) and (b) correspond to exactly the same $R$ and $I$, by simply using the output from~\cite{GSKZ17}, we cannot distinguish whether the data is generated from the pair of DAGs given in (a) or the pair of DAGs given in (b). In fact, since for these examples $R$ is empty,~\cite{GSKZ17} will not uncover any edge orientations consistent with the underlying DAGs $\G^{(1)}$ or $\G^{(2)}$.
	On the other hand, our algorithm is able to distinguish these two cases as well as discover the edge orientations of the underlying D-DAGs, as shown in Figure~\ref{fig:ghassami} (c).
\end{example}

\section{Theoretical analysis}


\subsection{Preliminaries: Schur complement} \label{sec:schur}

In this section, we 
describe how to use Schur complements to express $\beta_{i,j \mid S}^{(k)}$ and $(\sigma_{j \mid S}^{(k)})^2$ as rational functions in the variables $(\Bk_{ij})_{(i,j) \in A^{(k)}}$ and $(\sigma_j^{(k)})_{j \in [p]}$. This will be used for the proofs of Theorems~\ref{thm:skel} and~\ref{thm:dir} in the following sections.

For a subset of nodes $M \subseteq [p]$, let $X_M$ denote the random vector spanned by the random variables $X_i$ for all $i \in M$. Let $\neg M$ denote the complement of $M$ with respect to the full set of nodes, i.e., $\neg M := [p] \setminus M$. The inverse covariance matrix of the random vector $X_M$, i.e., $(\Sigma_{M, M})^{-1}$, can be obtained from $\Theta$ by taking the Schur complement:
\begin{align*}
\begin{split}
& \Theta_M := (\Sigma_{M, M})^{-1} \\
&\quad = \Theta_{M,M} - \Theta_{M, \neg M} (\Theta_{\neg M, \neg M})^{-1} \Theta_{\neg M, M}.
\end{split}
\end{align*}
Note that here $\Theta_M$ does not represent the submatrix of $\Theta$ with set of row and column indices in $M$, i.e., $\Theta_{M,M}$, but rather the Schur complement. For any two indices $i, j \in M$, let $i_M,j_M \in \left[| M | \right]$ denote the row/column indices of matrix $\Theta_M$ associated to $i$ and $j$, then the $(i_M,j_M)$-th entry of matrix $\Theta_M$ can be written as:
\begin{align*}
(\Theta_M)_{i_M j_M} = \Theta_{ij} - \Theta_{i, \neg M} (\Theta_{\neg M, \neg M})^{-1} \Theta_{\neg M, j}.
\end{align*}
In \cite{DST13,LUS14,URB13} the authors also give a combinatorial characterization of the Schur complement. Following their characterization, the value of $(\Theta_M)_{i_M j_M}$ is determined by the parameters of the {\it d-connecting paths} from node $i$ to $j$ given $M \setminus \{i, j\}$. In this case, the entry $(\Theta_M^{(k)})_{i_M j_M}$ would be invariant for $k = \{1,2\}$ if the parameters along the d-connecting paths are all the same. Finally, by applying the result of~\cite{POU11}, $\beta_{i,j\mid S}^{(k)}$ and $(\sigma_{j \mid S}^{(k)})^2$ can be written as:
\begin{align}
\begin{split} \label{eq:inv}
\beta_{i,j\mid S}^{(k)} = - \frac{(\Theta_M^{(k)})_{i_M j_M}}{(\Theta_M^{(k)})_{j_M j_M}} \quad & \textrm{where} \quad M = S \union \{i, j\}, \\
(\sigma_{j \mid S}^{(k)})^2 = \left((\Theta_M^{(k)})_{j_M j_M} \right)^{-1} \quad & \textrm{where} \quad M = S \union \{j\}.
\end{split}
\end{align}
Combining Eq.~\eqref{eq:inv} with the formula for the Schur complement, one can easily see that $\beta_{i,j\mid S}^{(k)}$ and $(\sigma_{j \mid S}^{(k)})^2$ can be expressed as rational functions in the variables $(\Bk_{ij})_{(i,j) \in A^{(k)}}$ and $(\sigma_j^{(k)})_{j \in [p]}$.

\subsection{Proof of Theorem~\ref{thm:skel}} \label{sec:pfskel}

In this Section we provide the consistency proofs of Theorem~\ref{thm:skel} when Algorithm~\ref{alg:skel} is initialized in the difference-UG. The proof of Theorem~\ref{thm:skel} when Algorithm~\ref{alg:skel} is initialized in the complete graph follows easily from the proofs in this section. To complete the proof, one also needs the following assumption:

\begin{assumption}[Difference-precision-matrix-faithfulness assumption] \label{ass:prec}
For any choices of $i, j \in [p]$, it holds that
\begin{enumerate}
\item If $\B1_{ij} \neq \B2_{ij}$, then $\Theta_{ij}^{(1)} \neq \Theta_{ij}^{(2)}$, and for any $\ell$ with directed path $i \rightarrow j \leftarrow \ell$ in either $\G^{(1)}$ or $\G^{(2)}$, $\ell \in S_\Theta$.
\item If $\sigma_j^{(1)} \neq \sigma_j^{(2)}$, then $\Theta_{jj}^{(1)} \neq \Theta_{jj}^{(2)}$, and $~\forall~ i \in \pa^{(1)}(j) \cup \pa^{(2)}(j), i \in S_\Theta$.
\end{enumerate}
\end{assumption}

Note that Assumption~\ref{ass:prec} is not a necessary assumption for the consistency of Algorithm~\ref{alg:skel}, since one can simply take $\Delta_\Theta$ as the fully connected graph on $p$ nodes and $S_\Theta = [p]$ as input. The same holds for the proof of Theorem~\ref{thm:dir}. The strength of Assumption ~\ref{ass:prec} is further analyzed in Remark~\ref{rk:prec}.

To prove Theorem~\ref{thm:skel}, we need to make use of the following two lemmas:
\begin{lemma} \label{lem:skel1}
Given $\Theta^{(1)}$ and $\Theta^{(2)}$, if $\Theta^{(1)}_{ij} = \Theta^{(2)}_{ij}$, then $(\Theta^{(1)}_M)_{i_M j_M} = (\Theta^{(2)}_M)_{i_M j_M}$ for $M =S_\Theta \cup  \{ i, j \}$.
\end{lemma}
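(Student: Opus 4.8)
The plan is to invoke the explicit Schur-complement formula recorded in Section~\ref{sec:schur},
\[
(\Theta_M^{(k)})_{i_M j_M} = \Theta_{ij}^{(k)} - \Theta_{i,\neg M}^{(k)}\,(\Theta_{\neg M,\neg M}^{(k)})^{-1}\,\Theta_{\neg M, j}^{(k)},
\]
and to argue that, for the particular choice $M = S_\Theta \cup \{i,j\}$, every factor in the correction term $\Theta_{i,\neg M}^{(k)}(\Theta_{\neg M,\neg M}^{(k)})^{-1}\Theta_{\neg M, j}^{(k)}$ is identical for $k=1$ and $k=2$. Combined with the hypothesis $\Theta_{ij}^{(1)} = \Theta_{ij}^{(2)}$, this immediately yields $(\Theta_M^{(1)})_{i_M j_M} = (\Theta_M^{(2)})_{i_M j_M}$.

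First I would note that $\neg M = [p]\setminus(S_\Theta\cup\{i,j\}) \subseteq [p]\setminus S_\Theta$. By the definition of the set of changed nodes, any $u\notin S_\Theta$ satisfies $\Theta_{u,v}^{(1)} = \Theta_{u,v}^{(2)}$ for every $v\in[p]$, and by symmetry of $\Theta^{(k)}$ the same holds for the column indexed by $u$. Applying this to each $u\in\neg M$ shows that the principal submatrix $\Theta_{\neg M,\neg M}^{(k)}$, the row block $\Theta_{i,\neg M}^{(k)}$, and the column block $\Theta_{\neg M, j}^{(k)}$ are all invariant across $k=\{1,2\}$ — in particular, even if $i$ or $j$ itself lies in $S_\Theta$, the relevant entries $\Theta_{iu}^{(k)}$ and $\Theta_{uj}^{(k)}$ with $u\in\neg M$ are still invariant, since $u\notin S_\Theta$. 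Since $\Theta_{\neg M,\neg M}^{(k)}$ is a principal submatrix of the positive definite precision matrix $\Theta^{(k)}$ it is invertible, so its inverse is invariant as well, and hence the whole correction term is invariant.

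Putting these two facts together gives the claim: the correction term agrees across the two models and the diagonal-to-the-formula term $\Theta_{ij}^{(k)}$ agrees by hypothesis. I do not expect a genuine obstacle here; the argument is essentially bookkeeping about which rows and columns of $\Theta^{(k)}$ avoid $S_\Theta$, the only point needing care being that $M\supseteq S_\Theta$ is exactly what forces $\neg M$ to avoid the changed nodes. This lemma then serves as the key step that, in the proof of Theorem~\ref{thm:skel}, lets one transfer invariance from the entries of $\Theta^{(k)}$ to the regression coefficients $\beta_{i,j\mid S}^{(k)}$ via the identities in~\eqref{eq:inv}.
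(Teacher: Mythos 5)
Your argument is correct and is essentially identical to the paper's proof: both apply the Schur-complement formula $(\Theta^{(k)}_M)_{i_M j_M} = \Theta_{ij}^{(k)} - \Theta_{i,\neg M}^{(k)} (\Theta_{\neg M,\neg M}^{(k)})^{-1} \Theta_{\neg M, j}^{(k)}$ and observe that, since $\neg M$ avoids $S_\Theta$, the blocks $\Theta_{M,\neg M}^{(k)}$ and $\Theta_{\neg M,\neg M}^{(k)}$ are invariant across $k$, so the correction term matches and the hypothesis on $\Theta_{ij}$ finishes the proof. Your write-up merely spells out the bookkeeping (invertibility of the principal submatrix, symmetry of $\Theta$) that the paper leaves implicit.
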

\begin{proof}
By Schur complement, we have that $(\Theta^{(k)}_M)_{i_M j_M} = \Theta_{ij}^{(k)} - \Theta_{i,\neg M}^{(k)} (\Theta_{\neg M, \neg M}^{(k)})^{-1} \Theta_{\neg M, j}^{(k)}$. By the definition of $S_\Theta$, $\Theta_{M, \neg M}^{(1)} = \Theta_{M, \neg M}^{(2)}$ and $\Theta_{\neg M, \neg M}^{(1)} = \Theta_{\neg M, \neg M}^{(2)}$.
\end{proof} 

\begin{lemma} \label{lem:skel3}
Given two linear SEMs $(\B1, \epsilon^{(1)})$ and $(\B2, \epsilon^{(2)})$ and denoting the precision precision matrix of the random vector $\Xk_{1:j}$ by $\Theta^{*(k)}$, then under Assumption~\ref{ass:prec} we have $S_{\Theta^*} \subseteq S_\Theta$.
\end{lemma}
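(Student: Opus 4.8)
The plan is to use the shared topological order to realize the marginal law of $\Xk_{1:j}$ as a smaller linear Gaussian SEM, read off its precision matrix $\Theta^{*(k)}$ entry-by-entry from~\eqref{eq:theta}, and then argue by cases: whenever an entry of $\Theta^{*(k)}$ differs across the two environments, some monomial in the corresponding formula must differ, and each such possibility forces the relevant node into $S_\Theta$ via one of the two clauses of Assumption~\ref{ass:prec}.

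First I would verify that $\Xk_{1:j}$ is itself a linear Gaussian SEM. Since $\Bk$ is strictly upper triangular with respect to the common topological order, every parent of a node $i \le j$ is again among $\{1,\dots,j\}$, so the structural equations of $X^{(k)}_1,\dots,X^{(k)}_j$ involve only these variables together with the noises $\epsilon^{(k)}_1,\dots,\epsilon^{(k)}_j$. Hence $\Xk_{1:j}$ follows a linear Gaussian SEM with autoregressive matrix the leading $j\times j$ block of $\Bk$ and noise variances $(\sigma^{(k)}_1)^2,\dots,(\sigma^{(k)}_j)^2$; equivalently $\Theta^{*(k)}$ is the Schur complement of $\Theta^{(k)}$ onto $\{1,\dots,j\}$. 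Applying~\eqref{eq:theta} to this sub-SEM gives
\begin{align*}
\Theta^{*(k)}_{i\ell} &= -(\sigma^{(k)}_\ell)^{-2}\Bk_{i\ell}+\sum_{\ell<m\le j}(\sigma^{(k)}_m)^{-2}\Bk_{im}\Bk_{\ell m}\quad (i<\ell\le j), \\
\Theta^{*(k)}_{ii} &= (\sigma^{(k)}_i)^{-2}+\sum_{i<m\le j}(\sigma^{(k)}_m)^{-2}(\Bk_{im})^2 \quad (i\le j),
\end{align*}
i.e.\ the same expressions as for $\Theta^{(k)}$ but with the sums truncated at $j$.

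Now take $i\in S_{\Theta^*}$ and fix $\ell$ with $\Theta^{*(1)}_{i\ell}\neq\Theta^{*(2)}_{i\ell}$; using symmetry of $\Theta^{*(k)}$ I may assume $i\le\ell$. Each displayed formula is a sum of monomials in the indeterminates, so at least one monomial changes between the environments, and any monomial that changes is nonzero in at least one environment. I would split into three cases. (i) The monomial $(\sigma_\ell)^{-2}\Bk_{i\ell}$ changes: either $\Bk_{i\ell}$ changes, so $\Theta^{(1)}_{i\ell}\neq\Theta^{(2)}_{i\ell}$ by Assumption~\ref{ass:prec}(1) and hence $i\in S_\Theta$; or $\Bk_{i\ell}$ is invariant and nonzero, which forces $\sigma^{(1)}_\ell\neq\sigma^{(2)}_\ell$ and $i\in\pa^{(1)}(\ell)\cup\pa^{(2)}(\ell)$, so $i\in S_\Theta$ by Assumption~\ref{ass:prec}(2). (ii) A monomial $(\sigma_m)^{-2}\Bk_{im}\Bk_{\ell m}$ changes for some $\ell<m\le j$ (this case also covers the diagonal entry, where $\ell=i$): since this monomial is nonzero in at least one environment, say the first, we have $i\to m\leftarrow\ell$ in $\G^{(1)}$; if $\Bk_{im}$ (resp.\ $\Bk_{\ell m}$) changes, Assumption~\ref{ass:prec}(1) applied to that edge puts $i\in S_\Theta$ (directly via $\Theta^{(1)}_{im}\neq\Theta^{(2)}_{im}$, resp.\ because $i$ completes the collider $i\to m\leftarrow\ell$); otherwise $\Bk_{im},\Bk_{\ell m}$ are invariant and nonzero, so $\sigma^{(1)}_m\neq\sigma^{(2)}_m$ and Assumption~\ref{ass:prec}(2) places every parent of $m$, in particular $i$, in $S_\Theta$. (iii) The diagonal monomial $(\sigma_i)^{-2}$ changes: then $\sigma^{(1)}_i\neq\sigma^{(2)}_i$, so $\Theta^{(1)}_{ii}\neq\Theta^{(2)}_{ii}$ by Assumption~\ref{ass:prec}(2) and $i\in S_\Theta$. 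In every case $i\in S_\Theta$, so $S_{\Theta^*}\subseteq S_\Theta$.

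Step 1 is routine and is the only place where the shared topological-order hypothesis is used. The main effort is the bookkeeping in Step 2: one has to check that a changed monomial is genuinely active (nonzero in at least one environment) before reading off parent and collider relations, and one has to match each sub-case with the correct clause of Assumption~\ref{ass:prec} — in particular invoking the collider clause of Assumption~\ref{ass:prec}(1) to move membership in $S_\Theta$ from one endpoint of a changed edge to the other node sharing the common child.
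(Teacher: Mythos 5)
Your proof is correct and follows essentially the same route as the paper's: realize $\Xk_{1:j}$ as a marginal SEM with autoregressive matrix $\Bk_{1:j,1:j}$, read off the entries of $\Theta^{*(k)}$ from the analogue of~\eqref{eq:theta}, and conclude via a case analysis that any changed entry forces membership in $S_\Theta$ through the two clauses of Assumption~\ref{ass:prec}. Your write-up is in fact somewhat more careful than the paper's ``short exercise'' step, since you explicitly check that a changed monomial must be active (nonzero in some environment) before extracting the parent or collider relation needed to invoke the assumption, and your diagonal formula correctly carries $(\Bk_{im})^2$ where the paper has a typo.
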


\begin{proof}
Since $\Bk$ is strictly upper triangular, the marginal distribution of the random vector $\Xk_{1:j}$ follows a new SEM,
\begin{align*}
\Xk_{1:j} = (\Bk_{1:j,1:j})^T \Xk_{1:j} + \epsilon_{1:j}^{(k)},
\end{align*}
where $\Bk_{1:j, 1:j}$ is the submatrix of $\Bk$ with the first $j$ rows and $j$ columns, and $\epsilon_{1:j}^{(k)}$ is the random vector with the first $j$ random variables of $\epsilon^{(k)}$. It can then be shown that the $(i,\ell)$-th entry of the new precision matrix $\Theta^*$ is given by:
\begin{align*}
\Theta_{i \ell}^{*(k)} = - (\sigma_{\ell}^{(k)})^{-2} \Bk_{i\ell} + \sum\limits_{\ell < m \leq j} (\sigma_m^{(k)})^{-2} \Bk_{i m} \Bk_{\ell m}.
\end{align*}
It is then a short exercise to show that $\Theta_{i \ell}^{*(1)} \neq \Theta_{i \ell}^{*(2)}$ only if at least one of the following two statements hold:
\begin{enumerate}
\item $\B1_{i\ell} \neq \B2_{i\ell}$ or $\sigma_{\ell}^{(1)} \neq \sigma_{\ell}^{(2)}$;
\item There exists at least one of $\ell < m \leq j$ with $i \rightarrow m \leftarrow \ell$ in either $\G^{(1)}$ or $\G^{(2)}$ such that $\B1_{i m} \neq \B2_{i m}$ or $\B1_{\ell m} \neq \B2_{\ell m}$ or $\sigma_m^{(1)} \neq \sigma_m^{(2)}$.
\end{enumerate}
By applying Assumption~\ref{ass:prec}, we have that $\Theta_{i \ell}^{*(1)} \neq \Theta_{i \ell}^{*(2)} \Rightarrow i, \ell \in S_{\Theta}$.

The diagonal entries of the precision matrix are given by:
\begin{align*}
\Theta_{i i}^{*(k)} = (\sigma_i^{(k)})^{-2} + \sum\limits_{i < m \leq j} (\sigma_m^{(k)})^{-2} \Bk_{i m}.
\end{align*}
Clearly, $\Theta_{i i}^{*(1)} \neq \Theta_{i i}^{*(2)}$ only if at least one of the following statements hold:
\begin{enumerate}
\item $\sigma_i^{(1)} \neq \sigma_i^{(2)}$;
\item $\B1_{i m} \neq \B2_{i m}$ or $\sigma_m^{(1)} \neq \sigma_m^{(2)}$ for at least one of the descendents of $i$ in either $\G^{(1)}$ or $\G^{(2)}$ with $i < m \leq j$.
\end{enumerate}
By applying Assumption~\ref{ass:prec} we have that $\Theta_{i i}^{*(1)} \neq \Theta_{i i}^{*(2)} \Rightarrow i \in S_{\Theta}$.
\end{proof}

\begin{lemma} \label{lem:skel2}
Given two linear SEMs $(\B1, \epsilon^{(1)})$ and $(\B2, \epsilon^{(2)})$, then under Assumption~\ref{ass:adj_faith},  $\B1_{ij} = \B2_{ij}$ if and only if
\begin{align*}
\exists S \subseteq S_\Theta \setminus \{i,j\} \;\st\; \beta_{i,j\mid S}^{(1)} = \beta_{i,j\mid S}^{(2)} \;\textrm{or}\; \beta_{j,i\mid S}^{(1)} = \beta_{j,i\mid S}^{(2)}.
\end{align*}
\end{lemma}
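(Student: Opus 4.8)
The statement is an ``if and only if'' so I would prove the two directions separately. The ``only if'' direction ($\B1_{ij} = \B2_{ij} \Rightarrow$ there exists a suitable $S$) is the constructive heart of the proof and echoes the ``main ingredient'' sentence following Theorem~\ref{thm:skel}: I would exhibit an explicit conditioning set, namely $S := (\pa^{(1)}(j) \cup \pa^{(2)}(j)) \setminus \{i\}$, and show $\beta_{i,j\mid S}^{(1)} = \beta_{i,j\mid S}^{(2)}$. The ``if'' direction is the contrapositive of Assumption~\ref{ass:adj_faith}: if $\B1_{ij} \neq \B2_{ij}$ then Assumption~\ref{ass:adj_faith} says \emph{every} $S \subseteq S_\Theta \setminus \{i,j\}$ gives both $\beta_{i,j\mid S}^{(1)} \neq \beta_{i,j\mid S}^{(2)}$ and $\beta_{j,i\mid S}^{(1)} \neq \beta_{j,i\mid S}^{(2)}$, so no such $S$ can exist; this direction is essentially immediate once the ``only if'' part has forced $i,j \in S_\Theta$ (needed so that Assumption~\ref{ass:adj_faith}, which is stated for $i,j \in S_\Theta$, applies). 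So the real work is the forward direction, and within it I must also confirm that the set $S$ I construct is actually contained in $S_\Theta \setminus \{i,j\}$.

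For containment $S \subseteq S_\Theta$, I would invoke Assumption~\ref{ass:prec}: if some $\ell \in \pa^{(1)}(j) \cup \pa^{(2)}(j)$ and $\B1_{ij} = \B2_{ij}$ — wait, more carefully, I need that the parents of $j$ that I am conditioning on lie in $S_\Theta$. Here I would split into cases. If $\B1_{ij} = \B2_{ij}$ and additionally $\sigma_j^{(1)} = \sigma_j^{(2)}$, then I claim all the relevant parents are still captured, but this needs an argument; alternatively, and more robustly, I note that if a parent $\ell$ of $j$ is \emph{not} in $S_\Theta$, then by definition $\Theta_{\ell,m}^{(1)} = \Theta_{\ell,m}^{(2)}$ for all $m$, and using the formula~\eqref{eq:theta} for $\Theta_{\ell j}$ one can back out that the edge weight $\B{k}_{\ell j}$ (together with downstream terms) is invariant — so such a parent contributes identically to the two SEMs and can harmlessly be included in, or excluded from, $S$ without affecting the regression coefficient. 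In fact the cleanest route may be to take $S$ to be $(\pa^{(1)}(j)\cup\pa^{(2)}(j)) \cap S_\Theta \setminus \{i\}$ and argue via the combinatorial/Schur-complement characterization from Section~\ref{sec:schur} that the parents outside $S_\Theta$ do not change the value of $\beta_{i,j\mid S}^{(k)}$.

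For the core claim $\beta_{i,j\mid S}^{(1)} = \beta_{i,j\mid S}^{(2)}$ with $S$ the union of parents of $j$ (minus $i$), the key observation is the regression identity mentioned in Section~\ref{sec:related}: when we regress $X_j^{(k)}$ on $X_{\{i\}\cup S}^{(k)}$ and $S \cup \{i\} \supseteq \pa^{(k)}(j)$, the regression coefficients are exactly the structural coefficients, i.e.\ $\beta_{i,j\mid S}^{(k)} = \B{k}_{ij}$ for the coefficient on $i$, and $\beta_{\ell,j\mid S}^{(k)} = \B{k}_{\ell j}$ for $\ell \in S$, while coefficients on non-parents are zero — this is because conditioning on all parents of $j$ d-separates $j$ from every other predictor in the set, so the residual $X_j^{(k)} - \sum_{\ell \in \pa^{(k)}(j)} \B{k}_{\ell j} X_\ell^{(k)} = \epsilon_j^{(k)}$ is uncorrelated with $X_{\{i\}\cup S}^{(k)}$. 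Since $\{i\}\cup S \supseteq \pa^{(1)}(j)$ and $\supseteq \pa^{(2)}(j)$ simultaneously (that's why we take the union), we get $\beta_{i,j\mid S}^{(k)} = \B{k}_{ij}$ for both $k$, and the hypothesis $\B1_{ij}=\B2_{ij}$ closes the argument. I expect the main obstacle to be the bookkeeping around Assumption~\ref{ass:prec} and the set $S_\Theta$: ensuring the chosen $S$ genuinely sits inside $S_\Theta \setminus \{i,j\}$ while still containing enough parents to make the d-separation argument go through, and handling the edge case where $i$ itself is a parent of $j$ in one or both graphs.
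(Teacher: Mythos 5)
Your ``if'' direction is fine (it is just the contrapositive of Assumption~\ref{ass:adj_faith}), but the forward direction has a genuine gap: the conditioning set you construct is not admissible in general, and neither of your two repairs works. Assumption~\ref{ass:prec} only forces the parents of $j$ into $S_\Theta$ when $\sigma_j^{(1)} \neq \sigma_j^{(2)}$ (part 2); in the remaining case $\B1_{ij}=\B2_{ij}$ \emph{and} $\sigma_j^{(1)}=\sigma_j^{(2)}$, a parent of $j$ can lie outside $S_\Theta$, so $(\pa^{(1)}(j)\cup\pa^{(2)}(j))\setminus\{i\}\not\subseteq S_\Theta$. Your ``cleanest route'' (intersect with $S_\Theta$ and argue the omitted parents are harmless) is false: omitting a parent $\ell$ introduces a correction of the form $B_{\ell j}\,\beta_{i,\ell\mid S}^{(k)}$ into $\beta_{i,j\mid S}^{(k)}$, and while $B_{\ell j}$ is invariant when $\ell\notin S_\Theta$, the coefficient $\beta_{i,\ell\mid S}^{(k)}$ need not be. Concretely, take four nodes with edges $1\to 2$ (weight changed), $1\to 3$, $3\to 4$, $2\to 4$ (these weights and all noise variances invariant). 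Then $S_\Theta=\{1,2\}$, the parent $3$ of node $4$ is outside $S_\Theta$, your set for $(i,j)=(2,4)$ is $\emptyset$, and $\beta_{2,4\mid\emptyset}^{(k)} = B_{24}+B_{34}B_{13}B_{12}^{(k)}\sigma_1^2/\bigl((B_{12}^{(k)})^2\sigma_1^2+\sigma_2^2\bigr)$, which differs across $k$; the lemma nevertheless holds here, but via $S=\{1\}$, which is neither a parent of $4$ nor produced by your construction.

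The paper closes exactly this case by a case split on $\sigma_j$. When $\sigma_j^{(1)}\neq\sigma_j^{(2)}$, Assumption~\ref{ass:prec} puts all parents of $j$ in $S_\Theta$ and your Markov/d-separation argument (regression on a superset of the parents returns the structural coefficient $\Bk_{ij}$) is precisely the paper's Case~2. When $\sigma_j^{(1)}=\sigma_j^{(2)}$, the paper instead marginalizes to $X_{1:j}^{(k)}$, where $j$ has no descendants, so the marginal precision matrix satisfies $\Theta^{*(k)}_{ij}=-(\sigma_j^{(k)})^{-2}\Bk_{ij}$ and $\Theta^{*(k)}_{jj}=(\sigma_j^{(k)})^{-2}$, both invariant under the hypotheses; it then chooses $S=S_{\Theta^*}\setminus\{i,j\}$, which is admissible because $S_{\Theta^*}\subseteq S_\Theta$ (Lemma~\ref{lem:skel3}, which is where Assumption~\ref{ass:prec} enters), and concludes via the Schur-complement invariance of Lemma~\ref{lem:skel1} together with~\eqref{eq:inv} that $\beta_{i,j\mid S}^{(1)}=\beta_{i,j\mid S}^{(2)}$. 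This marginalization-plus-$S_{\Theta^*}$ device is the idea missing from your plan; note also that the lemma's proof in this setting genuinely uses Assumption~\ref{ass:prec}, not only Assumption~\ref{ass:adj_faith}.
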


\begin{proof}
We show the ``if'' direction by proving the contrapositive, i.e. if $\B1_{ij} \neq \B2_{ij}$, then 
\begin{align} \label{eq:thm1pf1}
\forall S \subseteq S_\Theta \setminus \{i,j\},\; \beta_{i,j\mid S}^{(1)} \neq \beta_{i,j\mid S}^{(2)} \;\textrm{and}\; \beta_{j,i\mid S}^{(1)} \neq \beta_{j,i\mid S}^{(2)}.
\end{align}
This follows directly from Assumption~\ref{ass:adj_faith}. 

Now, we prove the ``only if'' direction, i.e., if $\B1_{ij} = \B2_{ij}$, then 
\begin{align*}
\exists S \subseteq S_\Theta \setminus \{i,j\} \;\st\; \beta_{i,j\mid S}^{(1)} = \beta_{i,j\mid S}^{(2)} \;\textrm{or}\; \beta_{j,i\mid S}^{(1)} = \beta_{j,i\mid S}^{(2)}.
\end{align*}
We divide the proof into two cases: $\sigma_j^{(1)} = \sigma_j^{(2)}$, and $\sigma_j^{(1)} \neq \sigma_j^{(2)}$.\\
\textbf{Case 1} $\sigma_j^{(1)} = \sigma_j^{(2)}$

Consider the precision matrix $\Theta^{*(k)}$ of the random vector $X_{1:j}^{(k)}$. In this case, we prove that  choosing the conditioning set $S = S_{\Theta^*} \setminus \{i, j\}$ implies regression invariance. This is a valid choice for $S$, since it is a subset of $S_\Theta \setminus \{i, j\}$ by Lemma~\ref{lem:skel3}.

We will first show that $\Theta^{*(1)}_{ij} = \Theta^{*(2)}_{ij}$ and $\Theta^{*(1)}_{jj} = \Theta^{*(2)}_{jj}$. According to the new SEM of the marginal distribution of the random vector $\Xk_{1:j}$, i.e.,
\begin{align*}
\Xk_{1:j} = (\Bk_{1:j,1:j})^T \Xk_{1:j} + \epsilon_{1:j}^{(k)},
\end{align*}
it is easy to conclude that node $j$ no longer has any descendants in the marginal SEM. We therefore have that
\[
\Theta_{ij}^{*(k)} = -(\sigma_j^{(k)})^{-2} \Bk_{ij} \quad \textrm{and} \quad
\Theta_{jj}^{*(k)} = (\sigma_j^{(k)})^{-2}.
\]
Since $\B1_{ij} = \B2_{ij}$ and $\sigma_j^{(1)} = \sigma_j^{(2)}$, then
\begin{align}\label{eq:thm1pf2}
\Theta_{ij}^{*(1)} = \Theta_{ij}^{*(2)} \quad\textrm{and}\quad \Theta_{jj}^{*(1)} = \Theta_{jj}^{*(2)}.
\end{align}

By choosing $M := S \cup \{i, j\}$ and denoting $M^\ast := [j] \setminus M$, recall that the entries of $\Theta_M^{(k)}$ can be written as
\begin{align*}
(\Theta^{(k)}_M)_{i_M j_M} = \Theta^{*(k)}_{ij} - \Theta^{*(k)}_{i, M^\ast} (\Theta^{*(k)}_{M^\ast, M^\ast})^{-1} \Theta^{*(k)}_{M^\ast, j}.
\end{align*}
Now by invoking Lemma \ref{lem:skel1} and Eq.~\eqref{eq:thm1pf2}, we obtain that  $(\Theta^{(1)}_M)_{i_M j_M} = (\Theta^{(2)}_M)_{i_M j_M}$ and $(\Theta^{(1)}_M)_{j_M j_M} = (\Theta^{(2)}_M)_{j_M j_M}$. Finally, using Eq.~\eqref{eq:inv}, we obtain $\beta_{i,j\mid S}^{(1)} = \beta_{i,j\mid S}^{(2)}$.

\textbf{Case 2} $\sigma_j^{(1)} \neq \sigma_j^{(2)}$

In this case, we prove that regressing on all of the parents of $j$ in both DAGs, i.e., choosing the conditioning set as $S = \pa^{(1)}(j) \cup \pa^{(2)}(j) \setminus \{i\}$, implies regression invariance. This is a valid choice for $S$, i.e. $S \subseteq S_\Theta \setminus \{i, j\}$, since Assumption ~\ref{ass:prec} ensures that if $\sigma_j^{(1)} \neq \sigma_j^{(2)}$ then $\ell \in S_\Theta$ for all $\ell \in \pa^{(k)}(j)$.

Let $M := S \cup \{i\}$. By regressing $\Xk_j$ onto $\Xk_M$, we get the regression coefficient as
$$
\Xk_j = (\beta_M^{(k)})^T \Xk_M + \tilde{\epsilon}_j^{(k)}.
$$

Let $(\beta_M^{(k)})_{\ell_M}$ denote the $\ell_M$-th entry of $\beta_M^{(k)}$. By the Markov property, when regressing $\Xk_j$ onto $\Xk_M$ where $\pa^{(k)}(j) \subseteq M \subseteq [j-1]$, it is guaranteed that $(\beta_{M}^{(k)})_{\ell_M} = \Bk_{\ell j}$ if $\ell \in \pa^{(k)}(j)$ and $(\beta_{M}^{(k)})_{\ell_M} = 0$ otherwise. Therefore, we have that $\beta_{i,j \mid S}^{(k)} = (\beta_{M}^{(k)})_{i_M} = \Bk_{ij}$, which completes the proof.
\end{proof}

We now show how the proof of Theorem~\ref{thm:skel} follows from this lemma.
\begin{proof}
By applying Assumption~\ref{ass:prec} we have that $\bar{\Delta} \subseteq \Delta_\Theta$. Then the proof of Theorem~\ref{thm:skel} follows trivially from Lemma~\ref{lem:skel2}, since Lemma~\ref{lem:skel2} shows that an edge $i - j$ is deleted during testing the invariance of regression coefficients if and only if $i - j \not\in \bar{\Delta}$.
\end{proof}

We end this section with a remark about the strength of Assumption~\ref{ass:prec}.

\begin{remark}[Strength of Assumption~\ref{ass:prec}] \label{rk:prec}
To analyze the strength of Assumption~\ref{ass:prec}, consider instead the following stronger assumption:

\vspace{0.05cm}
\noindent {\bf Assumption~D.1'} {\it For any choices of $i, j \in [p]$, it holds that
\begin{enumerate}
\item If $\B1_{ij} \neq \B2_{ij}$, then $\Theta_{ij}^{(1)} \neq \Theta_{ij}^{(2)}$, and $\Theta_{i\ell}^{(1)} \neq \Theta_{i\ell}^{(2)}$ for any $\ell$ with directed path $i \rightarrow j \leftarrow \ell$ in either $\G^{(1)}$ or $\G^{(2)}$.
\item If $\sigma_j^{(1)} \neq \sigma_j^{(2)}$, then $\Theta_{jj}^{(1)} \neq \Theta_{jj}^{(2)}$, and $\Theta_{ii}^{(1)} \neq \Theta_{ii}^{(2)} ~\forall~ i \in \pa^{(1)}(j) \cup \pa^{(2)}(j)$.
\end{enumerate}}
\vspace{0.05cm}

Assumption~D.1' is a strictly stronger assumption than Assumption~\ref{ass:prec}, i.e., Assumption~\ref{ass:prec} is satisfied whenever Assumption~D.1' is satisfied. We expect Assumption~D.1' to be much weaker than Assumptions~\ref{ass:adj_faith} and~\ref{ass:ori_faith} in the finite sample regime, and therefore the same also holds for Assumption~\ref{ass:prec}. This is because the number of hypersurfaces violating Assumption~D.1' scales at most as $\mathcal{O}(p^4)$, which is a much smaller number as compared to Assumptions~\ref{ass:adj_faith} and~\ref{ass:ori_faith} that scale as $\mathcal{O}(|\Delta_\Theta| 2^{| S_\Theta | - 1})$.
\qed
\end{remark}

\subsection{Proof of Theorem~\ref{thm:dir}} \label{sec:pfdir}

In this section, we provide a proof of Theorem~\ref{thm:dir} when Algorithm~\ref{alg:dir} is initialized in the difference-UG. 

\begin{lemma} \label{lem:dir1}
For all nodes $j$ incident to at least one edge in $\bar{\Delta}$, 
$\sigma_j^{(1)} = \sigma_j^{(2)}$ if and only if \;
		$\exists~ S  \subseteq S_\Theta \setminus \{i, j\} \st \sigma_{j \mid S}^{(1)} = \sigma_{j \mid S}^{(2)}$.
\end{lemma}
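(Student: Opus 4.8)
The plan is to prove the two implications separately, paralleling the two cases in the proof of Lemma~\ref{lem:skel2} with the residual variance $(\sigma_{j\mid S}^{(k)})^2$ playing the role of the regression coefficient there. The first thing I would record is the one consequence of the hypothesis ``$j$ incident to an edge of $\bar{\Delta}$'' that is actually used: since $\bar{\Delta}\subseteq\Delta_\Theta$, this node $j$ has some neighbor $i$ with $\Theta_{ij}^{(1)}\neq\Theta_{ij}^{(2)}$, hence $i,j\in S_\Theta$, and this membership is precisely what lets us invoke Assumption~\ref{ass:ori_faith} for this particular $j$.

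For the direction ``existence of an invariant $S$ implies $\sigma_j^{(1)}=\sigma_j^{(2)}$'', I would argue by contraposition and simply quote the assumption: if $\sigma_j^{(1)}\neq\sigma_j^{(2)}$ then, because $j\in S_\Theta$, Assumption~\ref{ass:ori_faith}(2) gives $\sigma_{j\mid S}^{(1)}\neq\sigma_{j\mid S}^{(2)}$ for \emph{every} $S\subseteq S_\Theta\setminus\{j\}$, so no admissible conditioning set can make the residual variance invariant. This half is essentially a restatement of the assumption.

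The substantive direction is ``$\sigma_j^{(1)}=\sigma_j^{(2)}$ implies there is an admissible $S$ with $\sigma_{j\mid S}^{(1)}=\sigma_{j\mid S}^{(2)}$'', which I would prove constructively, mimicking Case~1 of Lemma~\ref{lem:skel2}. Pass to the marginal SEM on $X_{1:j}^{(k)}$ with precision matrix $\Theta^{\ast(k)}$; since $j$ is a sink there, the diagonal formula from the proof of Lemma~\ref{lem:skel3} gives $\Theta^{\ast(k)}_{jj}=(\sigma_j^{(k)})^{-2}$, so $\sigma_j^{(1)}=\sigma_j^{(2)}$ forces $\Theta^{\ast(1)}_{jj}=\Theta^{\ast(2)}_{jj}$. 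Take $M:=S_{\Theta^\ast}\cup\{j\}$ and $S:=M\setminus\{j\}=S_{\Theta^\ast}\setminus\{j\}$. Running the argument of Lemma~\ref{lem:skel1} on the pair $(\Theta^{\ast(1)},\Theta^{\ast(2)})$ with changed-node set $S_{\Theta^\ast}$, applied to the diagonal index $j$ in place of an off-diagonal pair, yields that the $(j,j)$ entry of the Schur complement of $\Theta^{\ast(k)}$ onto $M$ is invariant across $k$; by Eq.~\eqref{eq:inv} this entry is $(\sigma_{j\mid S}^{(k)})^{-2}$, so $\sigma_{j\mid S}^{(1)}=\sigma_{j\mid S}^{(2)}$. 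Finally Lemma~\ref{lem:skel3} certifies $S=S_{\Theta^\ast}\setminus\{j\}\subseteq S_\Theta\setminus\{j\}$, so $S$ is admissible, which completes the direction.

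I expect the main obstacle to be the bookkeeping in the previous paragraph: identifying the Schur complement of the \emph{full} precision matrix $\Theta^{(k)}$ onto $M$ used in Eq.~\eqref{eq:inv} with the Schur complement of the \emph{marginal} matrix $\Theta^{\ast(k)}$ onto $M$. They coincide because $M\subseteq[j]$ and both equal $(\Sigma_{M,M}^{(k)})^{-1}$, but this needs to be stated carefully. A related minor point is that Lemma~\ref{lem:skel1} was phrased for an off-diagonal entry whereas we need its diagonal version; since the Schur identity $(\Theta_M^{(k)})_{j_M j_M}=\Theta_{jj}^{(k)}-\Theta_{j,\neg M}^{(k)}(\Theta_{\neg M,\neg M}^{(k)})^{-1}\Theta_{\neg M,j}^{(k)}$ has the same form and $\neg M$ is disjoint from $S_{\Theta^\ast}$, the argument transfers verbatim. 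I would also remark why the naive choice $S=\pa^{(1)}(j)\cup\pa^{(2)}(j)$, which makes the residual exactly $\epsilon_j^{(k)}$ and hence trivially invariant, is not used: when $\sigma_j$ is unchanged, Assumption~\ref{ass:prec}(2) gives no guarantee that those parents lie in $S_\Theta$, so one cannot certify admissibility that way, whereas routing through $S_{\Theta^\ast}$ and Lemma~\ref{lem:skel3} fixes exactly this gap.
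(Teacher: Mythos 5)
Your proof is correct and follows essentially the same route as the paper's: the ``if'' direction is the contrapositive of Assumption~\ref{ass:ori_faith}(2), and the ``only if'' direction passes to the marginal SEM on $X_{1:j}^{(k)}$, notes $\Theta^{*(k)}_{jj}=(\sigma_j^{(k)})^{-2}$ since $j$ is a sink there, and takes $S=S_{\Theta^*}\setminus\{j\}$ with the Schur-complement invariance of Lemma~\ref{lem:skel1} and Eq.~\eqref{eq:inv}. Your extra bookkeeping (the diagonal analogue of Lemma~\ref{lem:skel1}, identifying the marginal and full Schur complements, and certifying $S\subseteq S_\Theta$ via Lemma~\ref{lem:skel3} under Assumption~\ref{ass:prec}) just makes explicit steps the paper leaves implicit.
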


\begin{proof}
Proving the ``if'' direction is equivalent to showing that, if $\sigma_j^{(1)} \neq \sigma_j^{(2)}$, then 
\begin{align}\label{eq:thm2pf1}
\forall S \subseteq S_\Theta \setminus \{j\},\; \sigma_{j \mid S}^{(1)} \neq \sigma_{j \mid S}^{(2)}.
\end{align}
This follows directly from Assumption~\ref{ass:ori_faith}.

To prove the ``only if'' direction, consider again the marginal distribution of $\Xk_{1:j}$. Since $\sigma_j^{(1)} = \sigma_j^{(2)}$, we have that $\Theta_{jj}^{*(1)} = \Theta_{jj}^{*(2)}$. Let $M := S_{\Theta^*} \cup \{j\}$ and let $S := M \setminus \{j\}$, since $(\sigma_{j \mid S}^{(k)})^2 = ((\Theta^{(k)}_M)_{j_M j_M})^{-1}$ and $(\Theta^{(1)}_M)_{j_M j_M} = (\Theta^{(2)}_M)_{j_M j_M}$ by using Lemma~\ref{lem:skel1}, we have that $\sigma_{j \mid S}^{(1)} = \sigma_{j \mid S}^{(2)}$.
\end{proof}


\begin{lemma} \label{lem:dir2}
	$\forall~ i - j \in \bar{\Delta}$ such that $\sigma_j^{(1)} = \sigma_j^{(2)}$ it holds that, 
	\begin{enumerate}
		\item if $i \rightarrow j \in \Delta$, then $i \in S$ for all $ S \st \sigma_{j \mid S}^{(1)} = \sigma_{j \mid S}^{(2)}$.
		\item if $j \rightarrow i \in \Delta$, then $i \not\in S$ for all $S \st \sigma_{j \mid S}^{(1)} = \sigma_{j \mid S}^{(2)}$.
	\end{enumerate}
\end{lemma}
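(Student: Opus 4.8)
\textbf{Setup and strategy.} The plan is to exploit the combinatorial characterization of the Schur complement entry $(\Theta_M^{(k)})_{j_M j_M}$ in terms of d-connecting paths, together with the second part of Assumption~\ref{ass:ori_faith} (the part about $\sigma_{i \mid S \cup \{j\}}$). Throughout, fix an edge $i - j \in \bar\Delta$ with $\sigma_j^{(1)} = \sigma_j^{(2)}$; by definition of $\bar\Delta$ this means $\B1_{ij} \neq \B2_{ij}$ or $\B1_{ji}\neq\B2_{ji}$, but since both $\Bk$ are strictly upper triangular and $i-j$ is an edge with a definite orientation in $\Delta$, exactly one of $i\to j$ or $j\to i$ holds in $\Delta$. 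I would prove the two items as contrapositives or by direct contradiction, leaning on the fact (from Eq.~\eqref{eq:inv}) that $(\sigma_{j\mid S}^{(k)})^2 = ((\Theta_M^{(k)})_{j_Mj_M})^{-1}$ where $M = S\cup\{j\}$.

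\textbf{Item 1 ($i\to j\in\Delta$ forces $i\in S$).} Suppose for contradiction that $i\to j\in\Delta$ (so $\B1_{ij}\neq\B2_{ij}$, hence $i<j$) but there is some $S$ with $i\notin S$ and $\sigma_{j\mid S}^{(1)} = \sigma_{j\mid S}^{(2)}$. Since $\sigma_j^{(1)}=\sigma_j^{(2)}$, the first clause of Assumption~\ref{ass:ori_faith} does not directly apply to $j$, so I instead want to turn the residual-variance invariance when regressing $X_j$ on $X_S$ (with $i\notin S$) into a residual-variance invariance of the form $\sigma_{i\mid S'\cup\{j\}}$ that Assumption~\ref{ass:ori_faith}(1) \emph{does} forbid. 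The key algebraic observation is that residual variances are symmetric under swapping the roles of the regressor and the response in the precision-matrix picture: the relevant $2\times 2$ block of the Schur complement $\Theta_{M'}^{(k)}$, for $M' = S\cup\{i,j\}$, has diagonal entries $(\Theta_{M'}^{(k)})_{j_Mj_M}$ and $(\Theta_{M'}^{(k)})_{i_Mi_M}$ and these are linked through $\beta_{i,j\mid S}^{(k)}$, $\beta_{j,i\mid S}^{(k)}$ and $\sigma_{j\mid S}^{(k)}$, $\sigma_{i\mid S}^{(k)}$. Concretely, I would use that conditioning out $X_i$ from the regression of $X_j$ on $X_S$ relates $\sigma_{j\mid S}$ to $\sigma_{j\mid S\cup\{i\}}$ and $\beta_{i,j\mid S}$; and since $i<j$ and $i\notin S$, Markov-type reasoning (as in Case~2 of Lemma~\ref{lem:skel2}) lets me control these quantities in terms of the $\Bk$ entries. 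The cleanest route: enlarging $S$ to $S\cup(\pa^{(1)}(j)\cup\pa^{(2)}(j))$ if necessary (which stays inside $S_\Theta\setminus\{j\}$ by Assumption~\ref{ass:prec}, and which does not destroy invariance by Lemma~\ref{lem:dir1}-type arguments) reduces to the case where $S$ already contains all parents of $j$ except $i$, whence invariance of $\sigma_{j\mid S}$ together with $\B1_{ij}\neq\B2_{ij}$ contradicts clause (1) of Assumption~\ref{ass:ori_faith} applied with that $S\setminus\{i\}$ — that clause says $\sigma_{j\mid S'}^{(1)}\neq\sigma_{j\mid S'}^{(2)}$ for \emph{all} $S'\subseteq S_\Theta\setminus\{i,j\}$ when $\B1_{ij}\neq\B2_{ij}$, and $\sigma_{j\mid S}$ with $i\notin S$ is exactly $\sigma_{j\mid S'}$ for $S'=S$.

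\textbf{Item 2 ($j\to i\in\Delta$ forces $i\notin S$).} Now $j\to i\in\Delta$ means $\B1_{ji}\neq\B2_{ji}$ and $j<i$. Suppose for contradiction there is $S$ with $i\in S$ and $\sigma_{j\mid S}^{(1)}=\sigma_{j\mid S}^{(2)}$. Write $S = S_0\cup\{i\}$ with $i\notin S_0\subseteq S_\Theta\setminus\{i,j\}$. Here the idea is to apply clause (1) of Assumption~\ref{ass:ori_faith} \emph{with the roles reversed}: with the changed edge being $j\to i$, the assumption (reading $i,j$ of the assumption statement as $j,i$) gives $\sigma_{i\mid S_0}^{(1)}\neq\sigma_{i\mid S_0}^{(2)}$ and $\sigma_{j\mid S_0\cup\{i\}}^{(1)}\neq\sigma_{j\mid S_0\cup\{i\}}^{(2)}$ for every $S_0\subseteq S_\Theta\setminus\{i,j\}$. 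But $\sigma_{j\mid S_0\cup\{i\}} = \sigma_{j\mid S}$, which we assumed to be invariant — an immediate contradiction. So the work is essentially in checking that $S\setminus\{i\}$ is a legal choice of conditioning set (it is a subset of $S_\Theta\setminus\{i,j\}$ since $i,j\in S_\Theta$ by the premise $i-j\in\bar\Delta\subseteq\Delta_\Theta$) and that the indices $i,j$ in Assumption~\ref{ass:ori_faith} can be instantiated in this order.

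\textbf{Main obstacle.} The genuinely delicate point is Item~1: Assumption~\ref{ass:ori_faith}(1) is phrased for $\sigma_{j\mid S}$ with $S$ ranging over subsets \emph{not containing $i$}, which is precisely the situation we land in, so in fact the contradiction may be immediate once one notices that "$i\notin S$" makes $\sigma_{j\mid S}$ a term covered by the assumption — the apparent need to massage $S$ into containing the parents of $j$ may be unnecessary. I would first check whether the bare statement of Assumption~\ref{ass:ori_faith}(1) already closes Item~1 directly (it seems to: $\B1_{ij}\neq\B2_{ij}$ and $S\subseteq S_\Theta\setminus\{i,j\}$ give $\sigma_{j\mid S}^{(1)}\neq\sigma_{j\mid S}^{(2)}$), in which case both items are one-line contrapositives of the two halves of Assumption~\ref{ass:ori_faith}(1), and the only real content is the bookkeeping that $S\setminus\{i\}\subseteq S_\Theta\setminus\{i,j\}$ in Item~2 and that $i,j\in S_\Theta$. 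If instead some subtlety forces the parent-set enlargement, the obstacle becomes showing that enlarging the conditioning set preserves residual-variance invariance, which would follow from a Schur-complement computation analogous to Lemma~\ref{lem:skel1}.
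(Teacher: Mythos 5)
Your proposal is correct and, once you discard the detour in Item~1, it is exactly the paper's argument: both items are immediate contradictions with the two clauses of Assumption~\ref{ass:ori_faith}(1) — the first clause applied directly to $\B1_{ij}\neq\B2_{ij}$ with $S\subseteq S_\Theta\setminus\{i,j\}$, and the second clause applied with the roles of $i$ and $j$ swapped (changed edge $j\to i$), giving $\sigma_{j\mid S_0\cup\{i\}}^{(1)}\neq\sigma_{j\mid S_0\cup\{i\}}^{(2)}$. The parent-set enlargement and Schur-complement machinery you sketch in Item~1 are unnecessary, as you yourself correctly note in your final paragraph.
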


\begin{proof}
We prove both statements by contradiction. For $\B1_{ij} \neq \B2_{ij}$, suppose there exists a $S$ such that $\sigma_{j \mid S}^{(1)} = \sigma_{j \mid S}^{(2)}$ while $i \not\in S$. This contradicts Assumption~\ref{ass:ori_faith}.

Similarly, in the second statement for $\B1_{ji} \neq \B2_{ji}$, suppose there exists $S$ such that $\sigma_{j \mid S \cup \{i\}}^{(1)} = \sigma_{j \mid S \cup \{i\}}^{(2)}$. This also contradicts Assumption~\ref{ass:ori_faith}.
\end{proof}

We now show how the proof of Theorem~\ref{thm:dir} follows from this lemma.


\begin{proof}
By Lemma~\ref{lem:dir1}, there exists $S$ such that 
$\sigma_{j \mid S}^{(1)} = \sigma_{j \mid S}^{(2)}$ 
if and only if $\sigma_j^{(1)} = \sigma_j^{(2)}$. Therefore, all the nodes where the internal noise variance is unchanged will be chosen by Algorithm~\ref{alg:dir}. In addition, it also follows from  Lemma~\ref{lem:dir2} that for any $i \to j \in \Delta$, $i \in S$ and for any $j \to i \in \Delta$, $i \not \in S$. This proves that for any node $i$ where $\sigma_i^{(k)}$ is invariant, all edges adjacent to $i$ are oriented and that all edges oriented before the last step of Algorithm~\ref{alg:dir} are correctly oriented. 

It remains to show that all edges oriented in the last step of Algorithm~\ref{alg:dir} are correct. This easily follows from the acyclicity property of the underlying graphs and from the fact that all edge orientations before the last step are correct.
\end{proof}

\section{Examples for Remark~\ref{rk:assumption}} \label{sec:justification}

Since our assumptions are closely related to the faithfulness assumption, it is interesting to compare the entailment relationship between our assumptions, i.e., Assumptions~\ref{ass:adj_faith} and~\ref{ass:ori_faith}, and the faithfulness assumption. In this section, we give the following two counterexamples to show that our assumptions and the faithfulness assumption do not imply one another. 

\begin{example}
	\label{ex:comp1}
	We give a 3-node example that satisfies Assumptions~\ref{ass:adj_faith} and~\ref{ass:ori_faith} but does not satisfy the faithfulness assumption. Consider two linear SEMs $(\B1, \epsilon^{(1)})$ and $(\B2, \epsilon^{(2)})$ with $\epsilon^{(k)}_j \sim \N(0, 1) ~\forall~ j,k$ and where $\B1$ and $\B2$ are the autoregressive matrices defined as shown in Figure~\ref{fig:comp1}. Clearly, $\PP^{(1)}$ does not satisfy the faithfulness assumption with respect to $\G^{(1)}$ since nodes $1$ and $3$ are d-connected given $\emptyset$, but $X_1^{(1)} \independent X_3^{(1)}$. However, it is a short exercise to show that for all choices of $S$, i.e. $\emptyset$ and $\{ 2 \}$, we have $\beta_{1,3 \mid S}^{(1)} \neq \beta_{1,3 \mid S}^{(2)}$, $\beta_{3,1 \mid S}^{(1)} \neq \beta_{3,1 \mid S}^{(2)}$, $\sigma_{3 \mid S}^{(1)} \neq \sigma_{3 \mid S}^{(2)}$ and $\sigma_{1 \mid S \cup \{3\}}^{(1)} \neq \sigma_{1 \mid S \cup \{3\}}^{(2)}$. Therefore, this example satisfies Assumptions~\ref{ass:adj_faith} and~\ref{ass:ori_faith}.
\end{example}

\begin{figure}[t]
\centering
{\includegraphics[width=.25\textwidth]{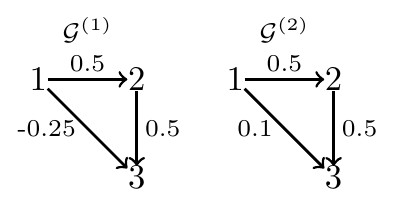}}
\caption{Example of two linear SEMs that satisfy Assumpitons~\ref{ass:adj_faith} and~\ref{ass:ori_faith} but do not satisfy the faithfulness assumption. The autoregressive matrices $\B1$ and $\B2$ are shown as edge weights in $\G^{(1)}$ and $\G^{(2)}$. We assume that all noise terms are standard normal random variables.}  \label{fig:comp1}
\end{figure}

\begin{figure}[b]
	\centering
	{\includegraphics[width=.25\textwidth]{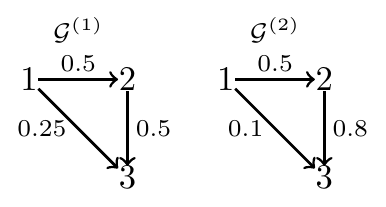}}
	\caption{Example of two linear SEMs that satisfy the faithfulness assumption but do not satisfy Assumption~\ref{ass:adj_faith}. The autoregressive matrices $\B1$ and $\B2$ are shown as edge weights in $\G^{(1)}$ and $\G^{(2)}$. We assume that all noise terms are standard normal random variables.}  \label{fig:comp2}
\end{figure}

\begin{example}
	\label{ex:comp2}
	We give a 3-node example that satisfies the faithfulness assumption but does not satisfy Assumption~\ref{ass:adj_faith}. Consider two linear SEMs where all $\epsilon^{(k)}_j$ are standard normal random variables and $\B1$ and $\B2$ are defined as shown in Figure~\ref{fig:comp2}. Although $\B1_{13} \neq \B2_{13}$, by choosing $S = \emptyset$, we still have that $\beta_{1,3 \mid S}^{(1)} = \beta_{1,3 \mid S}^{(2)} = 0.5$. Therefore, although both SEMs satisfy the faithfulness assumption, the pair does not satisfy Assumption~\ref{ass:adj_faith}.
\end{example}

Next, we give an example explaining the hypersurfaces that correspond to the set of parameters violating our assumptions versus the faithfulness assumption. This example shows that the number of hypersurfaces corresponding to violations of the faithfulness assumption is much higher than the number of hypersurfaces corresponding to violations of our assumptions, which implies that the faithfulness assumption is more restrictive in the finite sample regime.

\begin{figure}[t]
	\vspace{-0.3cm}
\centering
{\includegraphics[width=.25\textwidth]{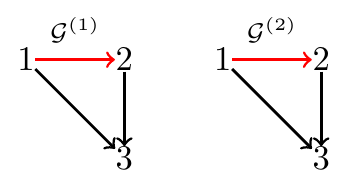}}
\vspace{-0.3cm}
\caption{Example of two fully connected linear SEMs. The red edges correspond to the edges that have different edge weights across the two DAGs, the black edges correspond to the edges that have the same edge weights across the two DAGs. The variances of internal noise terms remain the same for both DAGs.}
\vspace{-0.3cm}
\label{fig:comp3}
\end{figure}

\begin{example}
\label{ex:comp3}
We give a 3-node example to provide intuition for why the number of hypersurfaces violating the faithfulness assumption is usually much higher than the number of hypersurfaces violating our assumptions. Consider the two fully connected linear SEMs $(\B1, \epsilon^{(1)})$ and $(\B2, \epsilon^{(2)})$ shown in Figure~\ref{fig:comp3}. In this example, $\B1_{12} \neq \B2_{12}$ while the noise variances and all other edge weights are not changed across the two DAGs.

\begin{figure}[b!]
	\centering
	{\includegraphics[width=.9\textwidth]{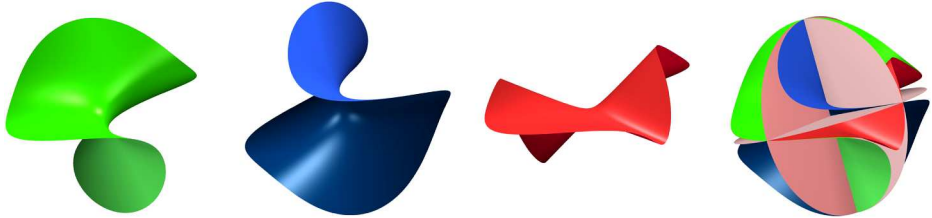}}
	\vspace{-0.2cm}
	\caption{Parameter values corresponding to unfaithful distributions in Example~\ref{ex:comp3}; the first three figures are the hypersurfaces corresponding to $\cov(X_1, X_2) = 0$, $\cov(X_1, X_2 \mid X_3) = 0$ and $\cov(X_1, X_3) = 0$ respectively when setting $\sigma_i=1$ for visualization in 3d; the last figure shows the hypersurfaces of the first $6$ polynomials with $\sigma_i=1$. Figure adopted from \cite[Figure~2]{URB13})}  \label{fig:hypersurfaces}
	\vspace{-0.2cm}
\end{figure}

If we think of each parameter $\Bk_{ij}$ or $\sigma_j^{(k)}$ not as a parameter but rather as an indeterminate, the set of parameters that violate the faithfulness assumption and our assumptions correspond to a system of polynomial equations in the following $7$ indeterminates: $(\B1_{12}, \B2_{12}, B_{13}, B_{23}, \sigma_1, \sigma_2, \sigma_3)$. Note that here we use a single indeterminate $B_{13}$ to encode both the parameters $\B1_{13}$ and $\B2_{13}$ since they have the same value. The set of parameters that violate the faithfulness assumption are given by the following $11$ polynomial equations and hence correspond to a collection of 11 hypersurfaces:
\begin{align*}
\cov(X^{(1)}_1, X^{(1)}_2) :\;&\;\; \B1_{12} \sigma_1^2 = 0,\\
\cov(X^{(1)}_1, X^{(1)}_3) :\;&\;\; B_{13} \sigma_1^2 + \B1_{12} B_{23} \sigma_1^2 = 0, \\
\cov(X^{(1)}_2, X^{(1)}_3) :\;&\;\; (\B1_{12})^2 B_{23} \sigma_1^2 + \B1_{12}B_{13} \sigma_1^2 + B_{23} \sigma_2^2 = 0, \\
\cov(X^{(1)}_1, X^{(1)}_2 \mid X^{(1)}_3) :\;&\;\;  - \frac{B_{13}B_{23}\sigma_1^2\sigma_2^2 - \B1_{12}\sigma_1^2\sigma_3^2}{(B_{13}  + \B1_{12}B_{23})^2 \sigma_1^2 + B_{23}^2 \sigma_2^2 + \sigma_3^2} = 0, \\
\cov(X^{(1)}_1, X^{(1)}_3 \mid X^{(1)}_2) :\;&\;\; \frac{B_{13}\sigma_1^2\sigma_2^2}{(\B1_{12})^2 \sigma_1^2 + \sigma_2^2} = 0, \\
\cov(X^{(1)}_2, X^{(1)}_3 \mid X^{(1)}_1) :\;&\;\; B_{23} \sigma_2^2 = 0, \\
\cov(X^{(2)}_1, X^{(2)}_2) :\;&\;\; \B2_{12} \sigma_1^2 = 0,\\
\cov(X^{(2)}_1, X^{(2)}_3) :\;&\;\; B_{13} \sigma_1^2 + \B2_{12} B_{23} \sigma_1^2 = 0, \\
\cov(X^{(2)}_2, X^{(2)}_3) :\;&\;\; (\B2_{12})^2 B_{23} \sigma_1^2 + \B2_{12}B_{13} \sigma_1^2 + B_{23} \sigma_2^2 = 0, \\
\cov(X^{(2)}_1, X^{(2)}_2 \mid X^{(2)}_3) :\;&\;\;  - \frac{B_{13}B_{23}\sigma_1^2\sigma_2^2 - \B2_{12}\sigma_1^2\sigma_3^2}{(B_{13}  + \B2_{12}B_{23})^2 \sigma_1^2 + B_{23}^2 \sigma_2^2 + \sigma_3^2} = 0, \\
\cov(X^{(2)}_1, X^{(2)}_3 \mid X^{(2)}_2) :\;&\;\; \frac{B_{13}\sigma_1^2\sigma_2^2}{(\B2_{12})^2 \sigma_1^2 + \sigma_2^2} = 0.
\end{align*}

To get a better sense of how the hypersurfaces of these polynomials are distributed in the parameter space, Figure~\ref{fig:hypersurfaces} visualizes the first $6$ hypersurfaces. This figure was directly adopted from Figure~2 of~\cite{URB13}. On the other hand, the polynomials of the parameters violating our assumptions are as follows:
\begin{align*}
\beta_{1,2 \mid \emptyset}^{(1)} - \beta_{1,2 \mid \emptyset}^{(2)} :\;&\;\; \B1_{12} - \B2_{12} = 0,\\
\beta_{2,1 \mid \emptyset}^{(1)} - \beta_{2,1 \mid \emptyset}^{(2)} :\;&\;\; \frac{\B1_{12} \sigma_1^2}{(\B1_{12})^2 \sigma_1^2 + \sigma_2^2} - \frac{\B2_{12} \sigma_1^2}{(\B2_{12})^2 \sigma_1^2 + \sigma_2^2} = 0, \\
(\sigma_{2 \mid \emptyset}^{(1)})^2 - (\sigma_{2 \mid \emptyset}^{(2)})^2 :\;&\;\; (\B1_{12})^2 \sigma_1^2 - (\B2_{12})^2 \sigma_1^2 = 0, \\
(\sigma_{1 \mid \{2\}}^{(1)})^2 - (\sigma_{1 \mid \{2\}}^{(2)})^2 :\;&\;\; \frac{1}{(\B1_{12})^2 \sigma_2^{-2} + \sigma_1^{-2}} - \frac{1}{(\B2_{12})^2 \sigma_2^{-2} + \sigma_1^{-2}} = 0.
\end{align*}
Clearly, the number of polynomials that violate Assumptions~\ref{ass:adj_faith} and~\ref{ass:ori_faith} is much smaller as compared to those of the faithfulness assumption. HAs a consequence our assumption is weaker than the faithfulness assumption in the finite sample regime (where violations correspond to points that are close to any of the hypersurfaces).
\end{example}

\section{Constraint-based method for estimating the difference-UG}

In this section, we present a constraint-based method for estimating the difference-UG model in linear SEMs with general additive noise, i.e., where the noise is not necessarily Gaussian. Our constraint-based method is built on performing a hypothesis test on each $(i,j)$-th entry and then finding the set of $(i,j)$-th entries where $\Theta_{ij}^{(1)} \neq \Theta_{ij}^{(2)}$. The test for invariance of diagonal entries, i.e., $\Theta_{ii}^{(k)}$, is equivalent to the hypothesis test $H_0^{i \mid [p] \setminus \{i\}}$ as discussed in Section~\ref{sec:method}, since $(\sigma_{i \mid [p] \setminus \{i\}}^{(k)})^2 = (\Theta_{ii}^{(k)})^{-1}$. For the non-diagonal entries, since the non-zero pattern of $\Theta_{ij}^{(k)}$ is the same as the non-zero pattern of the partial correlation coefficients, i.e., $\rho_{ij \mid [p] \setminus \{i,j\}}^{(k)}$, we first find the set of non-diagonal entries that are different between $\Theta^{(1)}$ and $\Theta^{(2)}$ by doing partial correlation tests for each distribution and then comparing the non-zero patterns. After that, for each entry $(i,j)$ that is estimated to be non-zero in both $\Theta^{(1)}$ and $\Theta^{(2)}$, we use the test statistic:
\begin{align*}
\begin{split}
&\hat{Q} := \left(\hat{\Theta}_{ij}^{(1)} - \hat{\Theta}_{ij}^{(2)}\right)^2 \cdot \\
&\qquad \left(\!\frac{\hat{\Theta}_{ii}^{(1)}\hat{\Theta}_{jj}^{(1)} \!\!+\! (\hat{\Theta}_{ij}^{(1)})^2}{n_1}  \!+\! \frac{\hat{\Theta}_{ii}^{(2)}\hat{\Theta}_{jj}^{(2)} \!\!+\! (\hat{\Theta}_{ij}^{(2)})^2}{n_2}\!\right)^{-1}
\end{split}
\end{align*}
and test if it fits the F-distribution with parameters $F(1, n_1 + n_2 - 2p + 2)$. If this is the case, we conclude that this particular entry $(i,j)$ is invariant between the two precision matrices. The consistency guarantees of $H_0^{i \mid [p] \setminus \{i\}}$ and partial correlation tests follow trivially from previous results. For $\hat{Q}$, it follows from Proposition~3 of~\cite{DP07} on the asymptotic normal distribution of the empirical precision matrix $\hat{\Theta}$ that if the null hypothesis is true, then $\hat{Q}$ converges in distribution to $\chi^2(1)$ as $n_1,n_2 \to \infty$. 

\section{Additional high-dimensional evaluation}

\textbf{High-dimensional setting: 10\% changes.}
We present the results of increasing the number of changes between the two DAGs, and hence the size of $S_\Theta$. We used the same simulation parameters as for Figure~\ref{fig:high-dim}, i.e. $p=100$ nodes, a neighbourhood size of $s=10$, and sample size $n=300$, except that the total number of changes was $10\%$ of the number of edges in $B^{(1)}$, rather than 5\%. As shown in Figure~\ref{fig:high-dim-supp}, both initializations of the DCI algorithm still outperform separate estimation by GES and the PC algorithm. However, because the underlying DAGs have maintained constant sparsity while the difference-DAG has become more dense, the gains in performance by using the DCI algorithm have slightly diminished, as expected by our theoretical analysis.

\begin{figure}[t]
	\centering
	\subfigure[difference-DAG skeleton $\bar{\Delta}$]
	{\includegraphics[width=.4\textwidth]{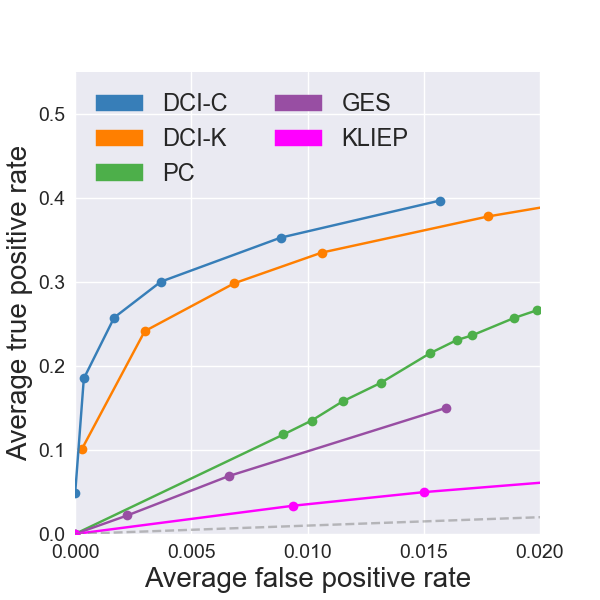}}
	\qquad\qquad\subfigure[difference-DAG $\Delta$]
	{\includegraphics[width=.4\textwidth]{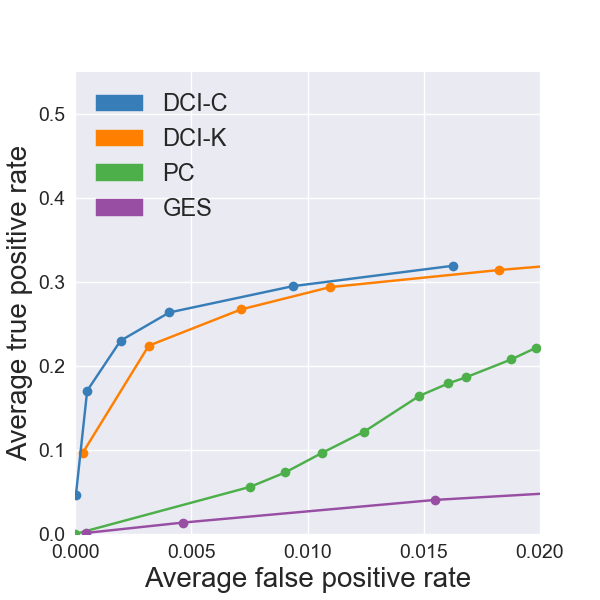}}
	\vspace{-0.2cm}
	\caption{ROC curves for estimating the difference-DAG $\Delta$ and its skeleton $\bar{\Delta}$ with $p = 100$ nodes, expected neighbourhood size $s = 10$, $n = 300$ samples, and 10\% percent change~between~DAGs.}
	\label{fig:high-dim-supp}
\end{figure}

\section{Real data analysis - ovarian cancer}

We tested our method on an ovarian cancer data set~\cite{TTG08}. This data set consists of the gene expression data of patients with ovarian cancer. The patients are divided into six subtypes (C1-C6). The C1 subtype was characterized by differential expression of genes associated with stromal and immune cell types and is associated with shorter survival rates. In this experiment, we divide the subjects into two groups, group 1 with $n_1 = 78$ subjects containing patients with C1 subtype, and group 2 with $n_2 = 113$ subjects containing patients with C2-C6 subtypes. In this work, we focused on two pathways from the KEGG database~\cite{KGS11,OGS99}, the apoptosis pathway containing $87$ genes, and the TGF-$\beta$ pathway with $82$ genes.

\begin{figure}[b!]
	\centering
	\subfigure[Apoptosis pathway]{\includegraphics[width=.35\textwidth]{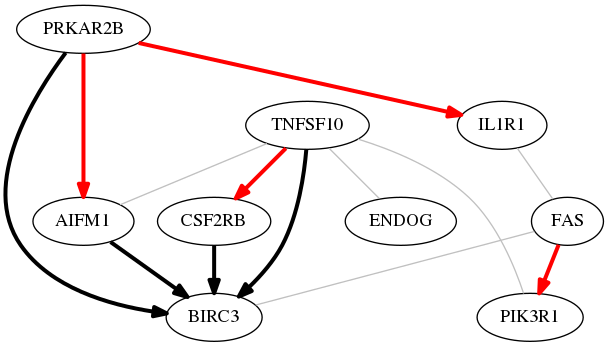}}
	\qquad\qquad\subfigure[TGF-$\beta$ pathway]{\includegraphics[width=.35\textwidth]{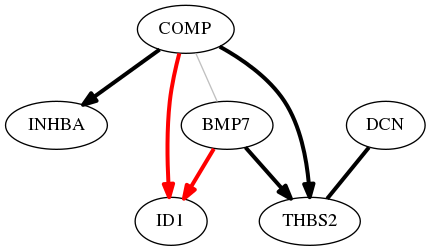}}
	\vspace{-0.2cm}
	\caption{Estimate of the difference DAG between the two groups for the apoptosis and TGF-$\beta$ pathways. The black lines represent the edges discovered by both our method and DPM, the red lines represent the edges discovered only by our method, and the grey lines represent the undirected edges discovered only by DPM.}
	\label{fig:ovarian}
	\vspace{-0.2cm}
\end{figure}

We compared our results to those obtained by the DPM method~\cite{ZCL14}, which infers the difference in the undirected setting. As input to Algorithm \ref{alg:skel}, we took $S_\Theta$ to be all of the nodes in the output of the DPM algorithm and took $\Delta_\Theta$ to be the fully connected graph on $S_\Theta$. We then learned the difference DAG using Algorithm \ref{alg:dir}. The final set of edges over different tuning parameters was chosen using stability selection as proposed in~\cite{Meinshausen2010} and is shown in Figure~\ref{fig:ovarian}. This procedure identified two hub nodes in the apoptosis pathway: BIRC3 and PRKAR2B. BIRC3 has been shown to be an inhibitor of apoptosis~\cite{Johnstone2008} and is one of the top disregulated genes in ovarian cancer~\cite{Jonsson2014}. This gene has also been recovered by the DPM method as one of the hub nodes. While BIRC3 has high in-degree, hub gene PRKAR2B has high out-degree, making it a better candidate for possible interventions in ovarian cancer since knocking out a gene with high out-degree will have widespread downstream effects on the target genes. Indeed, PRKAR2B is a known important regulatory unit for cancer cell growth~\cite{CBG08} and the RII-$\beta$ protein encoded by PRKAR2B has already been studied as a therapeutic target for cancer therapy~\cite{MGM06,CCY99}. In addition, PRKAR2B has also been shown to play an important role in disease progression in ovarian cancer cells~\cite{CNW08}. Since the DPM method does not infer directionality, it is not possible to tell which of the hub genes might be a better interventional target. This is remedied by our method and its impact for identifying possible therapeutic targets in real data is showcased by finding an already known drug target for cancer.

\begin{figure}[t!]
	\centering
	\subfigure[apoptosis, PC]
	{\includegraphics[width=.2\textwidth]{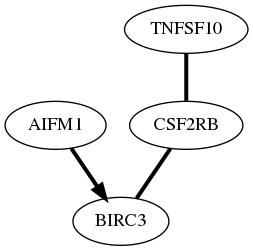}}
	\hspace{.1\textwidth}
	\subfigure[\!\!TGF-$\beta$,~GES]{\includegraphics[scale=0.55]{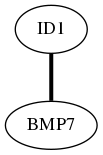}}
	\hspace{.1\textwidth}
	\subfigure[TGF-$\beta$, PC]{\includegraphics[scale = 0.55]{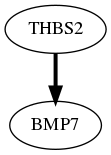}}
	\vspace{-0.2cm}
	\caption{Estimate of the difference DAG between the two groups for the apoptosis and TGF-$\beta$ pathways using the PC and GES algorithms.}
	\label{fig:ovarian_pc_ges}
	\vspace{-0.2cm}
\end{figure}

\begin{figure}[b!]
	\centering
	\subfigure[GES]{\includegraphics[width=.9\textwidth]{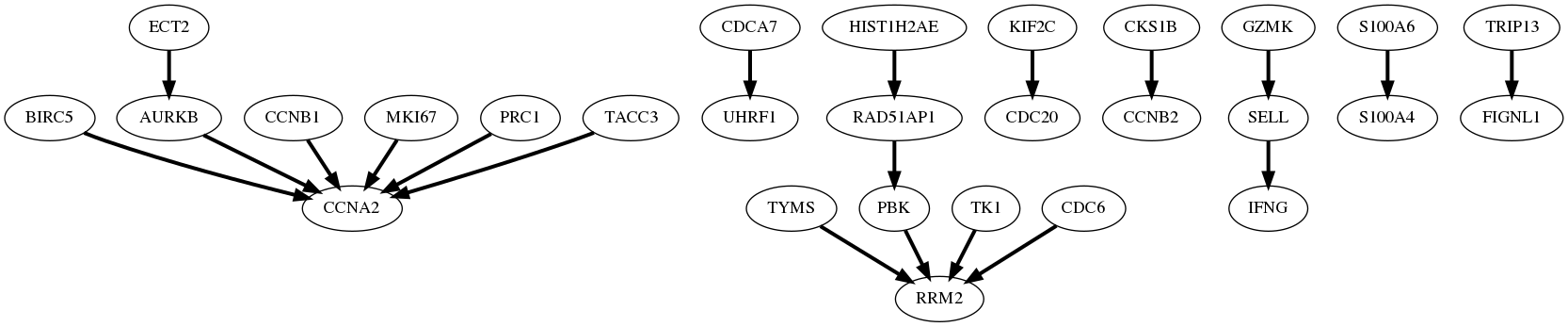}} \\
	\subfigure[PC]{\includegraphics[width=.9\textwidth]{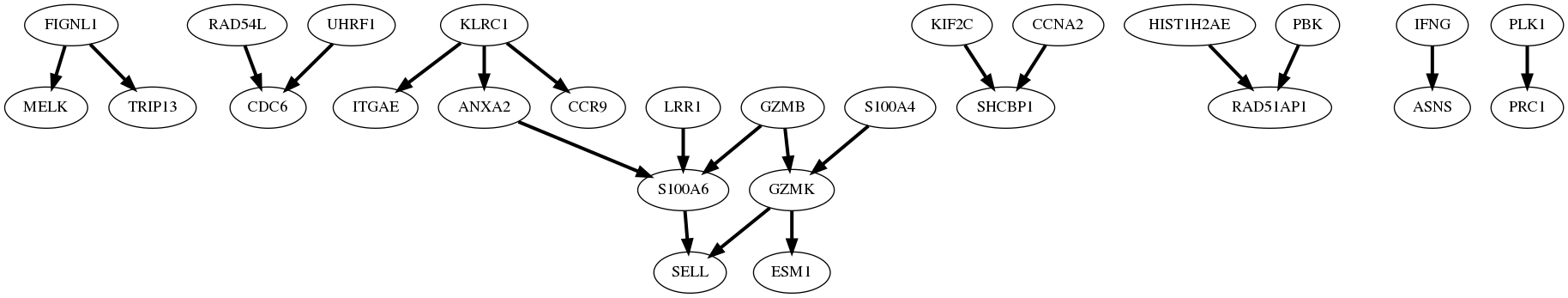}}
	\caption{Estimate of the difference DAG between naive and activated T cells using the PC and GES algorithms.}
	\label{fig:ges_tcell}
\end{figure}

For the TGF-$\beta$ pathway, our analysis identified THBS2 and COMP as hub nodes. Both of these genes have been implicated in resistance to chemotherapy in epithelial ovarian cancer~\cite{MFC13}, confirming the importance of our findings. These nodes were also recovered by DPM. 

Overall, the undirected graph discovered by DPM is similar to the DAG found by our method. The disparity in the TGF-$\beta$ pathway between the difference UG model $\Delta_\Theta$ and the difference DAG model $\Delta$ can be explained by the fact that the edge between COMP$-$BMP7 in $\Delta_\Theta$ can be accounted for by the two edges BMP7$\rightarrow$ID1 and COMP$\rightarrow$ID1 in $\Delta$. Though these edges might represent the true regulatory pathways, the sparsity-inducing penalty in the DPM algorithm could remove them while leaving the edge between COMP and BMP7. This disparity between the two algorithms highlights the importance of replacing correlative reasoning with causal reasoning, and accentuates the significance of our contribution.

We also applied the GES and PC algorithms on the ovarian cancer data set. We considered the set of edges that appeared in one estimated skeleton but disappeared in the other as the estimated skeleton of the D-DAG $\bar{\Delta}$. In determining orientations, we considered the arrows that were directed in one estimated CP-DAG but disappeared in the other as the estimated set of directed arrows. Figure~\ref{fig:ovarian_pc_ges} shows the results by applying the PC algorithm on the apoptosis and TGF-$\beta$ pathway and the results by applying GES on the TGF-$\beta$ pathway. Here we omitted GES results on the apoptosis pathway since GES algorithm did not discover any differences on the apoptosis pathway. Figure~\ref{fig:ovarian_pc_ges} shows that PC and GES cannot discover any hub nodes.

\section{Real data analysis - T cell activation}

We compare DCI with the GES and PC algorithms on the T cell activation data set. Figure~\ref{fig:ges_tcell} (a) shows the results of applying GES to naive and activated data sets separately and calculating the difference. Figure~\ref{fig:ges_tcell} (b) shows the estimated results of applying PC to the T cell data set. 

\end{document}